\title{Formalized Haar Measure}
\author{Floris van Doorn}{University of Pittsburgh, USA}{fpvdoorn@gmail.com}{https://orcid.org/0000-0003-2899-8565}{The author is supported by the Sloan Foundation (grant G-2018-10067)}
\authorrunning{F. van Doorn}
\keywords{Haar measure, measure theory, Lean, interactive theorem proving, formalized mathematics}
\newcommand{\R}{\ensuremath{\mathbb{R}}}
\newcommand{\N}{\ensuremath{\mathbb{N}}}
\newcommand{\C}{\ensuremath{\mathbb{C}}}
\renewcommand{\epsilon}{\varepsilon}
\definecolor{keywordcolor}{rgb}{0.7, 0.1, 0.1}   % red
\definecolor{commentcolor}{rgb}{0.4, 0.4, 0.4}   % grey
\definecolor{symbolcolor}{rgb}{0.0, 0.1, 0.6}    % blue
\definecolor{sortcolor}{rgb}{0.1, 0.5, 0.1}      % green
\definecolor{errorcolor}{rgb}{1, 0, 0}           % bright red
\definecolor{stringcolor}{rgb}{0.5, 0.3, 0.2}    % brown
\newcommand{\link}{\ensuremath{{}^\text{\faExternalLink}}}
\DeclareMathOperator{\Hom}{Hom}
\begin{document}

\maketitle

\begin{abstract}
We describe the formalization of the existence and uniqueness of Haar measure
in the Lean theorem prover.
The Haar measure is an invariant regular measure on
locally compact groups, and it has not been formalized in a proof assistant before.
We will also discuss the measure theory library in Lean's mathematical library \textsf{mathlib},
and discuss the construction of product measures
and the proof of Fubini's theorem for the Bochner integral.
\end{abstract}

\section{Introduction}
\label{section:introduction}
Measure theory is an important part of mathematics,
providing a rigorous basis for integration and probability theory.
The main object of study in this part is the concept of a measure,
that assigns a size to the measurable subsets of the space in question.
The most widely used measure is the Lebesgue measure $\lambda$ on $\R$ (or $\R^n$)
with the property that $\lambda([a,b]) = b - a$. %It is widely used in
The Lebesgue measure is \emph{translation invariant},
meaning that translating a measurable set does not change its measure.

An important generalization of the Lebesgue measure is the Haar measure,
which provides a measure on locally compact groups.
This measure is also invariant under translations, i.e.~applying the group operation.
For non-abelian groups we have to distinguish between left and right Haar measures,
which are invariant when applying the group operation on the left and right, respectively.
A Haar measure is essentially unique,
which means that any invariant measure is a constant multiple of a chosen Haar measure.
The Haar measure is vital in various areas of mathematics,
including harmonic analysis, representation theory,
probability theory and ergodic theory.

In this paper we describe the formal definition of the Haar measure
in the Lean theorem prover~\cite{moura2015lean}, and prove its uniqueness.
We build on top of the Lean mathematical library \textsf{mathlib}~\cite{mathlib}.
We heavily use the library for measure theory in \textsf{mathlib}, improving it in the process.
As part of this formalization we expanded the library with products measures and Fubini's theorem,
along with many other contributions to existing libraries.
The results in this paper have been fully incorporated into \textsf{mathlib}.

Many proof assistants have a library of measure theory and integration,
including Isabelle~\cite{richter2004integration,holzl2011measure},
HOL/HOL4~\cite{hurd2003formal,coble2010anonymity,mhamdi2013lebesgue},
HOL Light~\cite{harrison2013euclidean},
Mizar~\cite{endou2008lebesgue} %,
and PVS~\cite{lester2007topology}.
%and Metamath.
% http://us.metamath.org/mpegif/mmtheorems290.html#mm28922s
% http://us.metamath.org/mpegif/mmtheorems382.html#mm38160s
% TOC: http://us.metamath.org/mpegif/mmtheorems.html
Most of these libraries focus on the Lebesgue measure and the Lebesgue integral,
and on applications to probability theory or analysis in Euclidean spaces.

The Isabelle/HOL library has an especially good development of measure theory,
including product measures and the Bochner integral~\cite{avigad2017clt}.
The Lean library also has these concepts,
and the development of them was heavily inspired by the Isabelle/HOL library.
However, to my knowledge, the Haar measure has not been formalized in any proof assistant other than Lean.

In this paper we will discuss the measure theory library in Lean
(in Section~\ref{section:measure-theory}),
which was already in place before this formalization started,
and was developed by Johannes Hölzl, Mario Carneiro, Zhouhang Zhou and Yury Kudryashov, among others.
The other sections describe contributions that are new as of this formalization:
product measures (in Section~\ref{section:products}),
the definition of the Haar measure (in Section~\ref{section:haar-def}),
and the uniqueness of the Haar measure (in Section~\ref{section:uniqueness}).

We will link to specific results in \textsf{mathlib} using the icon
\href{https://github.com/leanprover-community/mathlib/blob/2cbaa9cc29bc44812b679810352ed383f35dcf75/src/measure_theory/haar_measure.lean#L537}{\link}.
To ensure that these links will work while \textsf{mathlib} is updated,
we link to the version of \textsf{mathlib} as of writing this paper.
Readers of this paper are also encouraged to use the search function
of the \textsf{mathlib} documentation pages
\href{https://leanprover-community.github.io/mathlib_docs/}{\link}
but we do not link to specific pages of the documentation,
as these links are prone to break as the library evolves.

We used different sources for this formalization.
We started out using the notes by Jonathan Gleason~\cite{gleason2010existence}
that gave a construction of the Haar measure in detail.
However, one of the main claims in that paper was incorrect,
as described in Section~\ref{section:haar-def}.
We then used the books by Halmos and Cohn~\cite{halmos1950measure,cohn2013measure} to fill in various details.
For product measures we followed Cohn~\cite{cohn2013measure}.
However, we could not find a good source for the proof of Fubini's theorem for the Bochner integral,
other than the formalization in Isabelle.
For this reason, we will provide detailed proofs of that result in Section~\ref{section:products}.
For the uniqueness of Haar measure we followed Halmos~\cite{halmos1950measure}.
While we define the Haar measure for any locally compact group,
we prove the uniqueness only for second-countable locally compact groups.
The reason is that Halmos gives a uniqueness proof using measurable groups,
and only second-countable topological groups form measurable groups.
In the proof of a key theorem of Halmos there was a gap,
which we fixed in the formalization by changing the statement of that theorem.

\section{Preliminaries}
\label{section:preliminaries}
\subsection{Lean and \textsf{mathlib}}
\label{section:lean}

Lean~\cite{moura2015lean} is a dependently typed proof assistant
with a strong metaprogramming framework~\cite{ebner2017metaprogramming}.
It has a large community-driven mathematical library, called \textsf{mathlib}.
We use results from various areas in this library,
including topology, analysis and algebra.
In particular the existing library for topological groups was very helpful for this formalization.
The fact that Lean provides a very expressive type theory was convenient
in the definition of the Bochner integral,
as we will discuss in Section~\ref{section:measure-theory}.

\textsf{mathlib} uses \emph{type-classes}~\cite{wadler1989polymorphism,spitters2011typeclasses}
to organize the mathematical structures associated to various types,
using the partially bundled approach. For example, when "G" is a type,
the statement \lstinline{[group G]} states that "G" has the structure of a group,
and \lstinline{[topological_space G]} states that "G" has the structure of a topological space.
The square brackets mean that the argument will be implicitly inserted in definitions and theorems
using type-class inference.
Some type-classes are \emph{mixins}, and take other type-classes as arguments.
For example, the class \lstinline{topological_group} takes as arguments classes for
\lstinline{group} and \lstinline{topological_space} (which are implicit).
This means that we need to write
\lstinline{[group G] [topological_space G] [topological_group G]}
to state that "G" is a topological group.
Similarly, predicates on topological spaces ---
such as \lstinline{t2_space}, \lstinline{locally_compact_space} or \lstinline{compact_space} ---
take the topological space structure as argument.
This design decision makes it somewhat verbose to get the right structure on a type, but has the
advantage that these components can be freely mixed as needed.

Lean distinguishes groups written multiplicatively ("group")
from groups written additively ("add_group").
This is convenient since it automatically establishes multiplicative notation ("1 * x⁻¹")
or additive notation ("0 + -x" or "0 - x") when declaring a group structure.
Also, most theorems exist for both versions; the property "x * y = y * x" is called "mul_comm"
and the property "x + y = y + x" is "add_comm".
To avoid code duplication, there is an attribute "@[to_additive]" that,
when applied to a definition or theorem for multiplicative groups,
automatically generates the corresponding declaration for additive groups
by replacing all multiplicative notions with the corresponding additive ones.

Some code snippets in this paper have been edited slightly for the sake of readability.

\subsection{Mathematics in \textsf{mathlib}}
\label{section:math-preliminaries}

The notation for the image of a set \lstinline{A : set X} under a function \lstinline{f : X → Y}
is \lstinline/f '' A = {y : Y | ∃ x ∈ A, f x = y}/ and the preimage of \lstinline{B : set Y} is
\lstinline/f ⁻¹' B = {x : X | f x ∈ B}/. The notation \lstinline{λ x, f x} is used for the function $x \mapsto f(x)$. We write "(x : X)" for the element "x" with its type "X" given explicitly.
This can also be used to \emph{coerce} "x" to the type "X".

\textsf{mathlib} has a large topology library,
where many concepts are defined in terms of filters.
The following code snippet contains some notions in topology used in this paper.
\begin{lstlisting}
class t2_space (X : Type*) [topological_space X] : Prop :=
(t2 : ∀ x y, x ≠ y → ∃ U V : set X, is_open U ∧ is_open V ∧
  x ∈ U ∧ y ∈ V ∧ U ∩ V = ∅)

def is_compact {X : Type*} [topological_space X] (K : set X) :=
∀ ⦃f : filter X⦄ [ne_bot f], f ≤ 𝓟 K → ∃ x ∈ K, cluster_pt x f

class locally_compact_space (X : Type*) [topological_space X] : Prop :=
(local_compact_nhds : ∀ (x : X) (U ∈ 𝓝 x), ∃ K ∈ 𝓝 x, K ⊆ U ∧
  is_compact K)

class second_countable_topology : Prop :=
(is_open_generated_countable : ∃ C : set (set X), countable C ∧
  t = generate_from C)

class complete_space (X : Type*) [uniform_space X] : Prop :=
(complete : ∀ {f : filter α}, cauchy f → ∃ x, f ≤ 𝓝 x)

class topological_group (G : Type*) [topological_space G] [group G] :=
(continuous_mul : continuous (λ p : G × G, p.1 * p.2))
(continuous_inv : continuous (inv : G → G))
\end{lstlisting}
Some of these definition of a compact set might look unfamiliar, formulated in terms of filters,
using in particular the principal filter "𝓟" and the neighborhood filter "𝓝".
For example, the second definition states that a set "K" is \emph{compact}~\href{https://github.com/leanprover-community/mathlib/blob/2cbaa9cc29bc44812b679810352ed383f35dcf75/src/topology/subset_properties.lean#L64}{\link}\
if any filter "f ≠ ⊥" containing "K" has a cluster point "x" in "K",
which means that "𝓝 x ⊓ f ≠ ⊥".
It is equivalent to the usual notion of compactness
that every open cover of "K" has a finite subcover.~\href{https://github.com/leanprover-community/mathlib/blob/2cbaa9cc29bc44812b679810352ed383f35dcf75/src/topology/subset_properties.lean#L300}{\link}\
A complete space is a space where every Cauchy sequence converges,~\href{https://github.com/leanprover-community/mathlib/blob/2cbaa9cc29bc44812b679810352ed383f35dcf75/src/topology/uniform_space/cauchy.lean#L209}{\link}\
which is formulated in the general setting of a \emph{uniform spaces},
which simultaneously generalizes metric spaces and topological groups.
We define a \emph{locally compact group} to be a topological group in which the topology is both
locally compact and Hausdorff ($T_2$)

Lean's notion of infinite sums is defined for absolutely convergent series
as the limit of all finite partial sums:~\href{https://github.com/leanprover-community/mathlib/blob/2cbaa9cc29bc44812b679810352ed383f35dcf75/src/topology/algebra/infinite_sum.lean#L57}{\link}\
\begin{lstlisting}
def has_sum [add_comm_monoid M] [topological_space M]
  (f : I → M) (x : M) : Prop :=
tendsto (λ s : finset I, ∑ i in s, f i) at_top (𝓝 x)
\end{lstlisting}
If "M" is Hausdorff, then the sum is unique if it exists,
and we denote it by "∑' i, f i".
If the series does not have a sum, it is defined to be "0".

% A topological group "G" is a group "G" equipped with a topology so that multiplication and the multiplicative inverse are continuous maps.
The type "ennreal"~\href{https://github.com/leanprover-community/mathlib/blob/2cbaa9cc29bc44812b679810352ed383f35dcf75/src/data/real/ennreal.lean#L80}{\link}\ is the type of nonnegative real numbers extended by a new element $\infty$,
which we will also denote $[0,\infty]$ in this paper.

A \emph{normed space}~\href{https://github.com/leanprover-community/mathlib/blob/2cbaa9cc29bc44812b679810352ed383f35dcf75/src/analysis/normed_space/basic.lean#L911}{\link}\ $X$ is a vector space over a normed field $F$
equipped with a norm $\|{\cdot}\|:X\to\R$ satisfying
\begin{itemize}
\item $(X, d)$ is a metric space, with the distance function given by $d(x,y) \vcentcolon= \|x - y\|$.
\item for $r \in F$ and $x \in X$ we have $\|r\cdot x\| = \|r\|\cdot \|x|$.
\end{itemize}
In this paper, we will only consider normed spaces over $\R$,
which is a normed field with its norm given by the absolute value.
A \emph{Banach space} is a complete normed space.
In Lean we do not explicitly define Banach spaces, but instead talk about complete normed spaces.

\section{Measure Theory in \textsf{mathlib}}
\label{section:measure-theory}

In this section we describe the background measure theory library used in this formalization.

A basic notion in measure theory is a \emph{$\upsigma$-algebra} on $X$,
which is a nonempty collection of subsets of $X$
that is closed under complements and countable unions.
In \textsf{mathlib} this is formulated as a type-class, with name \lstinline{measurable_space}:~\href{https://github.com/leanprover-community/mathlib/blob/2cbaa9cc29bc44812b679810352ed383f35dcf75/src/measure_theory/measurable_space.lean#L78}{\link}\
\begin{lstlisting}
class measurable_space (X : Type*) :=
(is_measurable : set X → Prop)
(is_measurable_empty : is_measurable ∅)
(is_measurable_compl : ∀ A, is_measurable A → is_measurable Aᶜ)
(is_measurable_Union : ∀ A : ℕ → set X, (∀ i, is_measurable (A i)) →
  is_measurable (⋃ i, A i))
\end{lstlisting}
We say that a function is \emph{measurable}
if the preimage of every measurable set is measurable:~\href{https://github.com/leanprover-community/mathlib/blob/2cbaa9cc29bc44812b679810352ed383f35dcf75/src/measure_theory/measurable_space.lean#L478}{\link}\
\begin{lstlisting}
def measurable [measurable_space X] [measurable_space Y] (f : X → Y) :=
∀ ⦃B : set Y⦄, is_measurable B → is_measurable (f ⁻¹' B)
\end{lstlisting}
Note that we use "is_measurable" for sets and "measurable" for functions.
% Like many structures, there is a \emph{Galois insertion} between the double powerset
% \lstinline{set (set X)} of $X$ and the collection of

In \textsf{mathlib}, measures are defined as special case of outer measures.
An outer measure on $X$ is a monotone function $m:\mathcal{P}(X)\to[0,\infty]$ such that
$m(\emptyset)=0$ and $m$ is countably subadditive:~\href{https://github.com/leanprover-community/mathlib/blob/2cbaa9cc29bc44812b679810352ed383f35dcf75/src/measure_theory/outer_measure.lean#L60}{\link}\
\begin{lstlisting}
structure outer_measure (X : Type*) :=
(measure_of : set X → ennreal)
(empty : measure_of ∅ = 0)
(mono : ∀ {A₁ A₂}, A₁ ⊆ A₂ → measure_of A₁ ≤ measure_of A₂)
(Union_nat : ∀ (A : ℕ → set X),
  measure_of (⋃ i, A i) ≤ (∑' i, measure_of (A i)))
\end{lstlisting}
A measure on a measurable space is an outer measure with two extra properties:
it is countably additive on measurable sets,
and given the value on measurable sets,
the outer measure is the maximal one that is compatible with these values
(which is called \lstinline{trim} in the snippet below).~\href{https://github.com/leanprover-community/mathlib/blob/2cbaa9cc29bc44812b679810352ed383f35dcf75/src/measure_theory/measure_space.lean#L97}{\link}\
\begin{lstlisting}
structure measure (X : Type*) [measurable_space X]
  extends outer_measure X :=
(m_Union : ∀ ⦃A : ℕ → set X⦄,
  (∀ i, is_measurable (A i)) → pairwise (disjoint on A) →
  measure_of (⋃ i, A i) = ∑' i, measure_of (A i))
(trimmed : to_outer_measure.trim = to_outer_measure)
\end{lstlisting}
This definition has two very convenient properties:
\begin{enumerate}
  \item We can apply measures to any set,
  without having to provide a proof that the set is measurable.
  \item Two measures are equal when they are equal on measurable sets.~\href{https://github.com/leanprover-community/mathlib/blob/2cbaa9cc29bc44812b679810352ed383f35dcf75/src/measure_theory/measure_space.lean#L149}{\link}\
\end{enumerate}
Given a measure $\mu$ on $X$ and a map $f:X\to Y$,
we can define the \emph{pushforward} $f_*\mu$,~\href{https://github.com/leanprover-community/mathlib/blob/2cbaa9cc29bc44812b679810352ed383f35dcf75/src/measure_theory/measure_space.lean#L710}{\link}\ which is a measure on $Y$,
defined on measurable sets $A$ by $(f_*\mu)(A)=\mu(f^{-1}(A))$.~\href{https://github.com/leanprover-community/mathlib/blob/2cbaa9cc29bc44812b679810352ed383f35dcf75/src/measure_theory/measure_space.lean#L718}{\link}\

We proceed to define lower Lebesgue integration $\int^-$ for functions $f:X\to[0,\infty]$,
where $X$ is a measurable space. We annotate this integral with a minus sign to distinguish it
from the Bochner integral.
This is done first for \emph{simple functions},
i.e.~functions with finite range
with the property that the preimage of all singletons are measurable.
The integral of a simple function $g$ is simply~\href{https://github.com/leanprover-community/mathlib/blob/2cbaa9cc29bc44812b679810352ed383f35dcf75/src/measure_theory/integration.lean#L541}{\link}\
\[\int^- g\, \mathrm{d}\mu= \int^- g(x)\, \mathrm{d}\mu(x)=
\sum_{y \in g(X)} \mu(g^{-1}\{y\})\cdot y \in [0,\infty].\]
If $f:X\to[0,\infty]$ is any function,
we can define the \emph{(lower) Lebesgue integral} of $f$
to be the supremum of $\int^- g\, \mathrm{d}\mu(x)$
for all simple functions $g$ that satisfy $g\le f$ (pointwise).~\href{https://github.com/leanprover-community/mathlib/blob/2cbaa9cc29bc44812b679810352ed383f35dcf75/src/measure_theory/integration.lean#L850}{\link}\
In \textsf{mathlib} we denote the Lebesgue integral of "f" by \lstinline{∫⁻ x, f x ∂μ}.
We prove the standard properties of the Lebesgue integral for nonnegative functions,
such as the monotone convergence theorem:~\href{https://github.com/leanprover-community/mathlib/blob/2cbaa9cc29bc44812b679810352ed383f35dcf75/src/measure_theory/integration.lean#L997}{\link}\
\begin{lstlisting}
theorem lintegral_supr {f : ℕ → X → ennreal}
  (hf : ∀ n, measurable (f n)) (h_mono : monotone f) :
  (∫⁻ x, ⨆ n, f n x ∂μ) = (⨆ n, ∫⁻ x, f n x ∂μ)
\end{lstlisting}

After defining the Lebesgue integral for nonnegative functions,
most books and formal libraries continue to define
the Lebesgue integral for (signed) functions $f:X\to\R$.
This is defined as the difference of the integrals of the positive and negative parts of $f$.
However, this is not very general: separate definitions are needed
for integrals for functions with codomain $\C$ or $\R^n$.
Instead, we opt for the more general definition of the \emph{Bochner integral},
which is defined for function that map into any second-countable real Banach space.

If $X$ is a topological space, we equip it with the \emph{Borel} $\upsigma$-algebra, which is the
smallest $\upsigma$-algebra that contains all open sets (or equivalently, all closed sets).~\href{https://github.com/leanprover-community/mathlib/blob/2cbaa9cc29bc44812b679810352ed383f35dcf75/src/measure_theory/borel_space.lean#L151}{\link}\
The fact that $X$ is equipped with the Borel $\upsigma$-algebra is a mixin,
and is specified by providing the arguments
\lstinline{[topological_space X] [measurable_space X] [borel_space X]}.

Let $X$ be a measurable space and $E$ be a second-countable real Banach space.
We define two quotients on functions from $X$ to $E$.

Let $X \to_{\mu} E$ be the measurable maps $X \to E$
modulo $\mu$-a.e.~equivalence, i.e.~modulo the relation
\[f =_\mu g \Longleftrightarrow \mu(\{x \mid f(x)\ne g(x)\})=0.~\href{https://github.com/leanprover-community/mathlib/blob/2cbaa9cc29bc44812b679810352ed383f35dcf75/src/measure_theory/ae_eq_fun.lean#L93}{\link}\ \]
Note that we are using the fact that Lean is based on dependent type theory here, since
$X \to_{\mu} E$ is a type that depends on $\mu$.
Here our formalization differs from the one in Isabelle,
since in Isabelle this type cannot be formed.
Instead, a similar argument is given purely on functions,
without passing to the quotient.

If $f : X \to E$ we say that $f$ is \emph{integrable} if~\href{https://github.com/leanprover-community/mathlib/blob/2cbaa9cc29bc44812b679810352ed383f35dcf75/src/measure_theory/l1_space.lean#L383}{\link}\
\begin{itemize}
  \item $f$ is $\mu$-a.e.~measurable,
  i.e.~there is a measurable function $g : X \to E$ with $f=_\mu g$;%
  \footnote{Until recently, the definition of ``integrable'' in \textsf{mathlib} required $f$ to be measurable.
  The definition was modified to increase generality.
  The motivating reason was that we would regularly take integrals of a function $f$
  with a measure $\mu|_A$ defined by $\mu|_A(B)=\mu(A\cap B)$.
  A function that is (for example) continuous on $A$ is also $\mu|_A$-a.e.~measurable,
  and we usually do not need any information about the behavior of $f$ outside $A$, so
  it is not necessary to assume that $f$ is measurable everywhere.}
  % Note that $f$ is $\mu|_A$-a.e.~measurable if $f\chi_A$ is $\mu$-a.e.~measurable,
  % where $\chi$ is the characteristic function.
  % This is independent of the behavior of $f$ outside $A$.}
  and
  \item $\int^- \|f(x)\|\, \mathrm{d}\mu(x)<\infty.$
\end{itemize}
The $L^1$-space $L^1(X,\mu;E)$ is defined
as those equivalence classes of functions in $X \to_{\mu} E$ that are integrable.~\href{https://github.com/leanprover-community/mathlib/blob/2cbaa9cc29bc44812b679810352ed383f35dcf75/src/measure_theory/l1_space.lean#L633}{\link}\
$L^1(X,\mu;E)$ is a normed space,~\href{https://github.com/leanprover-community/mathlib/blob/2cbaa9cc29bc44812b679810352ed383f35dcf75/src/measure_theory/l1_space.lean#L713}{\link}\ with the norm given by
\[\|f\|\vcentcolon=\int^- \|f(x)\|\, \mathrm{d}\mu(x).\]

We define the Bochner integral first for simple functions,
similar to the definition for the Lebesgue integral.
If $g : X \to E$ is a simple function then its Bochner integral is~\href{https://github.com/leanprover-community/mathlib/blob/2cbaa9cc29bc44812b679810352ed383f35dcf75/src/measure_theory/bochner_integration.lean#L211}{\link}\
\[\int g\, \mathrm{d}\mu= \int g(x)\, \mathrm{d}\mu(x)=
\sum_{\substack{y \in g(X) \text{ s.t.}\\ \mu(g^{-1}\{y\}) < \infty}} \mu(g^{-1}\{y\})\cdot y \in E.\]
The symbol $\cdot$ denotes the scalar multiplication in the Banach space $E$.
Since this definition respects $\mu$-a.e.~equivalence,
we can also define the Bochner integral on the simple $L^1$ functions.~\href{https://github.com/leanprover-community/mathlib/blob/2cbaa9cc29bc44812b679810352ed383f35dcf75/src/measure_theory/bochner_integration.lean#L756}{\link}\
On the simple $L^1$ functions this definition is continuous~\href{https://github.com/leanprover-community/mathlib/blob/2cbaa9cc29bc44812b679810352ed383f35dcf75/src/measure_theory/bochner_integration.lean#L791}{\link}\
and the simple $L^1$ functions are dense in all $L^1$ functions,~\href{https://github.com/leanprover-community/mathlib/blob/2cbaa9cc29bc44812b679810352ed383f35dcf75/src/measure_theory/bochner_integration.lean#L710}{\link}\
and $E$ is complete, so we can extend the Bochner integral to all $L^1$ functions.~\href{https://github.com/leanprover-community/mathlib/blob/2cbaa9cc29bc44812b679810352ed383f35dcf75/src/measure_theory/bochner_integration.lean#L885}{\link}\

Finally, for an arbitrary function $f:X \to E$ we define its Bochner integral to be
$0$ if $f$ is not integrable and the integral of $[f]\in L^1(X,\mu;E)$ otherwise.~\href{https://github.com/leanprover-community/mathlib/blob/2cbaa9cc29bc44812b679810352ed383f35dcf75/src/measure_theory/bochner_integration.lean#L960}{\link}\
If we take the integral of a integrable function $X\to \R_{\ge0}$, then the two choices agree:~\href{https://github.com/leanprover-community/mathlib/blob/2cbaa9cc29bc44812b679810352ed383f35dcf75/src/measure_theory/bochner_integration.lean#L1250}{\link}\
\[\int^- f\, \mathrm{d}\mu = \int f\, \mathrm{d}\mu.\]
However, when $f$ is not integrable, then the equality does not hold:
the LHS is $\infty$, while the RHS is $0$.
We then prove useful properties about the Bochner integral,
such as the dominated convergence theorem:~\href{https://github.com/leanprover-community/mathlib/blob/2cbaa9cc29bc44812b679810352ed383f35dcf75/src/measure_theory/bochner_integration.lean#L1120}{\link}\
\begin{lstlisting}
theorem tendsto_integral_of_dominated_convergence
  {F : ℕ → X → E} {f : X → E} (bound : X → ℝ)
  (F_measurable : ∀ n, ae_measurable (F n) μ)
  (f_measurable : ae_measurable f μ)
  (bound_integrable : integrable bound μ)
  (h_bound : ∀ n, ∀ᵐ x ∂μ, ∥F n x∥ ≤ bound x)
  (h_lim : ∀ᵐ x ∂μ, tendsto (λ n, F n x) at_top (𝓝 (f x))) :
  tendsto (λ n, ∫ x, F n x ∂μ) at_top (𝓝 (∫ x, f x ∂μ))
\end{lstlisting}
In this statement, "∀ᵐ x ∂μ, P x" means that "P x" holds for "μ"-almost all "x", in other words that
"μ { x | ¬ P x } = 0",
and "tendsto (λ n, g n) at_top (𝓝 x)" means that "g n" tends to "x" as "n" tends to "∞".

Other results about integration include part 1~\href{https://github.com/leanprover-community/mathlib/blob/2cbaa9cc29bc44812b679810352ed383f35dcf75/src/measure_theory/set_integral.lean#L501}{\link}\ and 2~\href{https://github.com/leanprover-community/mathlib/blob/2cbaa9cc29bc44812b679810352ed383f35dcf75/src/measure_theory/interval_integral.lean#L1302}{\link}\ of the fundamental theorem of calculus,
but we will not use these results in this paper.

\section{Products of Measures}
\label{section:products}

To prove that the Haar measure is essentially unique, we need to work with product measures
and iterated integrals.
In this section we will define the product measure and prove Tonelli's and Fubini's theorem.
(Sometimes both theorems are called Fubini's theorem, but in this paper we will distinguish them.)
Tonelli's theorem characterizes the Lebesgue integral for nonnegative functions in the product space,
and Fubini's theorem does the same (under stronger conditions) for the Bochner integral.
By symmetry, both theorems also give sufficient conditions for swapping the order of integration
when working with iterated integrals.

The contents of this section have been formalized before in Isabelle.
Product measures, Tonelli's theorem and Fubini's theorem for the Lebesgue integral are discussed
in the paper~\cite{holzl2011measure}. Fubini's theorem for the Bochner integral is significantly
harder to prove, but has also been formalized in Isabelle after the appearance of
that paper.~\href{https://isabelle.in.tum.de/library/HOL/HOL-Analysis/Bochner_Integration.html}{\link}\
We could not find a good source for the proof of Fubini's theorem for the Bochner integral,
so we read through the proof in the Isabelle formalization for key ideas in various intermediate lemmas.
For this reason, we will provide detailed proofs of the results in the section.

In this section we will restrict our attention to $\upsigma$-finite measures,
since product measures are much nicer for $\upsigma$-finite measures,
and most of the results do not hold without this condition.
We say that a measure $\mu$ on a space $X$ is $\upsigma$-finite if $X$ can be written as a union
of countably many subsets $(A_i)_i$ with $\mu(A_i)<\infty$ for all $i$.~\href{https://github.com/leanprover-community/mathlib/blob/2cbaa9cc29bc44812b679810352ed383f35dcf75/src/measure_theory/measure_space.lean#L1570}{\link}\

Before we define products, we need some to know that limits of measurable functions are measurable.

\begin{lemma}\label{lem:measurable-limit}Let $X$ be a measurable space.
  \begin{enumerate}
    \item \label{part:inf_sup}
    Given a sequence of measurable functions $f_n:X\to [0,\infty]$ for $n\in\N$, the
    pointwise suprema and infima $f(x\vcentcolon=\sup\{f_n(x)\}_n$ and $g(x\vcentcolon=\inf\{f_n(x)\}_n$ are measurable.~\href{https://github.com/leanprover-community/mathlib/blob/2cbaa9cc29bc44812b679810352ed383f35dcf75/src/measure_theory/borel_space.lean#L809}{\link}\
    \item \label{part:lim_real}
    Given a sequence of measurable functions $f_n:X\to[0,\infty]$ for $n\in\N$
    that converge pointwise to a function $f$, we can conclude that $f$ is measurable.~\href{https://github.com/leanprover-community/mathlib/blob/2cbaa9cc29bc44812b679810352ed383f35dcf75/src/measure_theory/borel_space.lean#L1259}{\link}\
    \item \label{part:lim_metric} Let $Y$ be a metric space.
    Given a sequence of measurable functions $f_n:X\to Y$ for $n\in\N$
    that converge pointwise to a function $f$, we can conclude that $f$ is measurable.~\href{https://github.com/leanprover-community/mathlib/blob/2cbaa9cc29bc44812b679810352ed383f35dcf75/src/measure_theory/borel_space.lean#L1282}{\link}\
  \end{enumerate}
\end{lemma}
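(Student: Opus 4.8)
The plan is to prove each part by reducing ``measurable'' to the statement that preimages of the sets in a convenient generating family of the Borel $\upsigma$-algebra on the codomain are measurable. This reduction is always legitimate because, for any $f$, the collection $\{B \mid f^{-1}(B) \text{ is measurable}\}$ is a $\upsigma$-algebra, so it suffices that it contain a generating family. For parts \ref{part:inf_sup} and \ref{part:lim_real} the relevant fact is that the Borel $\upsigma$-algebra on $[0,\infty]$ is generated by the rays $(a,\infty]$ (equivalently, by the rays $[0,a)$).

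For part \ref{part:inf_sup}: writing $f(x) = \sup_n f_n(x)$, we have $f(x) \in (a,\infty]$ exactly when $f_n(x) \in (a,\infty]$ for some $n$, so $f^{-1}\big((a,\infty]\big) = \bigcup_n f_n^{-1}\big((a,\infty]\big)$, a countable union of measurable sets; hence $f$ is measurable. Dually, $g(x) = \inf_n f_n(x)$ satisfies $g^{-1}\big([0,a)\big) = \bigcup_n f_n^{-1}\big([0,a)\big)$, which gives measurability of $g$. For part \ref{part:lim_real}: since $f_n \to f$ pointwise in the complete lattice $[0,\infty]$ with its order topology, $f(x) = \limsup_n f_n(x) = \inf_m \sup_{n \ge m} f_n(x)$. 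Each inner function $x \mapsto \sup_{n \ge m} f_n(x)$ is measurable by part \ref{part:inf_sup} applied to the countable family $(f_n)_{n \ge m}$, and a further application of part \ref{part:inf_sup} (the infimum case) shows that $f$ is measurable.

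For part \ref{part:lim_metric}, which I expect to be the main obstacle, we check preimages of open sets $U \subseteq Y$, since these generate the Borel $\upsigma$-algebra. We may assume $U \ne Y$, so that $U^c$ is a nonempty closed set and $y \mapsto d(y, U^c)$ is a $1$-Lipschitz, hence continuous, real-valued function with $U = \{y \mid d(y,U^c) > 0\}$. Put $V_k := \{y \mid d(y, U^c) > 1/k\}$, which is open. I claim that
\[ f^{-1}(U) \;=\; \bigcup_{k \in \N}\ \bigcup_{N \in \N}\ \bigcap_{n \ge N} f_n^{-1}(V_k). \]
For $\subseteq$: if $f(x) \in U$, set $\delta := d(f(x), U^c) > 0$ and pick $k$ with $1/k < \delta$; since $f_n(x) \to f(x)$, eventually $d(f_n(x), f(x)) < \delta - 1/k$, and then $d(f_n(x), U^c) \ge d(f(x),U^c) - d(f_n(x),f(x)) > 1/k$, so $f_n(x) \in V_k$ for all large $n$. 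For $\supseteq$: if $d(f_n(x), U^c) > 1/k$ for all $n \ge N$, then by continuity of $d(\cdot, U^c)$ along $f_n(x) \to f(x)$ we get $d(f(x), U^c) \ge 1/k > 0$, hence $f(x) \in U$.

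Since each $f_n^{-1}(V_k)$ is measurable ($f_n$ is measurable and $V_k$ is open, hence Borel), and the displayed identity uses only countably many unions and intersections, $f^{-1}(U)$ is measurable; as $U$ ranges over the open subsets of $Y$ this yields measurability of $f$. The one genuinely non-routine step is discovering this countable description of $f^{-1}(U)$ in terms of the $f_n$ and the distance function; the remaining verifications are straightforward $\upsigma$-algebra bookkeeping, as is the reduction principle used throughout.
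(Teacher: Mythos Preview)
Your proof is correct. Parts~\ref{part:inf_sup} and~\ref{part:lim_real} match the paper's argument essentially verbatim (you take $\limsup$ where the paper takes $\liminf$, an immaterial choice). For part~\ref{part:lim_metric} both you and the paper hinge on the continuous distance-to-a-closed-set function, but the executions diverge: the paper works with a closed set $C$, observes that the measurable functions $x\mapsto d(f_n(x),C)$ converge pointwise to $h(x):=d(f(x),C)$, invokes part~\ref{part:lim_real} as a black box to conclude that $h$ is measurable, and then reads off $f^{-1}(C)=h^{-1}(\{0\})$. You instead work with an open $U$ and unfold the same mechanism into the explicit identity $f^{-1}(U)=\bigcup_k\bigcup_N\bigcap_{n\ge N} f_n^{-1}(V_k)$. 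The paper's route is shorter and makes transparent that part~\ref{part:lim_metric} is a direct corollary of part~\ref{part:lim_real}; your route is self-contained and bypasses the detour through $[0,\infty]$-valued limits, at the price of discovering and verifying the countable decomposition by hand.
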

\begin{proof}
  For part~\ref{part:inf_sup},
  note that the collection of intervals $[0,x)$ generate the $\upsigma$-algebra on $[0,\infty]$,
  as do the intervals $(x,\infty]$.
  The conclusion follows from the fact that measurable sets are closed under countable unions
  and observing that
  $f^{-1}([0,x))=\bigcup_n f_n^{-1}([0,x))$ and
  $g^{-1}((x,\infty])=\bigcup_n f_n^{-1}((x,\infty])$

  Part~\ref{part:lim_real} follows by observing that $f(x)=\liminf_n f_n(x) = \sup_n\inf_{m\ge n} f_m(x)$,
  and applying both claims in part~\ref{part:inf_sup}.

  For part~\ref{part:lim_metric}, let $C$ be a closed set in $Y$.
  The function $d({-},C):Y\to[0,\infty]$ that assigns to each point the (minimal) distance to $C$ is
  continuous. Therefore, we know that $h(x) \vcentcolon= d(f(x),C)$ is the limit of the measurable functions
  $d(f_n(x),C)$ as $n\to\infty$. By part~\ref{part:lim_real} we know that $h$ is measurable.
  Therefore $f^{-1}(C)=h^{-1}(\{0\})$ is measurable (the equality holds because $C$ is closed).
\end{proof}

If $X$ and $Y$ are measurable spaces,
the $\upsigma$-algebra on $X\times Y$ is the smallest $\upsigma$-algebra that make the projection
$X\times Y\to X$ and $X\times Y\to Y$ measurable.~\href{https://github.com/leanprover-community/mathlib/blob/2cbaa9cc29bc44812b679810352ed383f35dcf75/src/measure_theory/measurable_space.lean#L661}{\link}\
Alternatively, it can be characterized as the $\upsigma$-algebra generated by sets of the form
$A\times B$ for measurable $A\subseteq X$ and $B\subseteq Y$.~\href{https://github.com/leanprover-community/mathlib/blob/2cbaa9cc29bc44812b679810352ed383f35dcf75/src/measure_theory/prod.lean#L135}{\link}\

For the rest of this section, $\mu$ is a $\upsigma$-finite measure on $X$
and $\nu$ is a $\upsigma$-finite measure on $Y$.
For a set $A\subseteq X\times Y$ we write $A_x=\{y \in Y \mid (x,y) \in A\}$ for a slice of $Y$.
In Lean, this is written as "prod.mk x ⁻¹' A".
We want to define the product measure evaluated at $A$
as an integral over the volume of the slices of $A$.
To make sure that this makes sense, we need to prove that this is a measurable function.~\href{https://github.com/leanprover-community/mathlib/blob/2cbaa9cc29bc44812b679810352ed383f35dcf75/src/measure_theory/prod.lean#L167}{\link}\
\begin{lstlisting}
lemma measurable_measure_prod_mk_left [sigma_finite ν] {A : set (X × Y)}
  (hA : is_measurable A) : measurable (λ x, ν (prod.mk x ⁻¹' A))
\end{lstlisting}
This lemma crucially depends on the fact that $\nu$ is $\upsigma$-finite,
but we omit the proof here. %% TODO: prove?
We can now define the product measure $\mu\times \nu$ on $X\times Y$ as~\href{https://github.com/leanprover-community/mathlib/blob/2cbaa9cc29bc44812b679810352ed383f35dcf75/src/measure_theory/prod.lean#L317}{\link}\
\begin{align}\label{eq:prod-def}
  (\mu\times\nu)(A) &= \int^- \nu(A_x)\, \mathrm{d}\mu(x)
\end{align}

It is not hard to see that $(\mu\times\nu)(A\times B)=\mu(A)\nu(B)$
for measurable $A\subseteq X$ and $B\subseteq Y$~\href{https://github.com/leanprover-community/mathlib/blob/2cbaa9cc29bc44812b679810352ed383f35dcf75/src/measure_theory/prod.lean#L330}{\link}\
and that $\mu\times\nu$ is itself $\upsigma$-finite.~\href{https://github.com/leanprover-community/mathlib/blob/2cbaa9cc29bc44812b679810352ed383f35dcf75/src/measure_theory/prod.lean#L421}{\link}\

Also, when we pushforward the measure $\mu\times\nu$ across the map $X\times Y\to Y\times X$
that swaps the coordinates, we get the measure $\nu\times\mu$~\href{https://github.com/leanprover-community/mathlib/blob/2cbaa9cc29bc44812b679810352ed383f35dcf75/src/measure_theory/prod.lean#L452}{\link}\
(this equality is easily checked because both measures are equal on rectangles).
This immediately gives the symmetric version of \eqref{eq:prod-def}:~\href{https://github.com/leanprover-community/mathlib/blob/2cbaa9cc29bc44812b679810352ed383f35dcf75/src/measure_theory/prod.lean#L459}{\link}\
\[(\mu\times\nu)(A) = \int^- \mu(A^y)\, \mathrm{d}\nu(y),\]
where $A^y=\{x \in X \mid (x,y) \in A\}$.

For a function $f:X\times Y \to Z$ define $f_x:Y\to Z$ by $f_x(y)\vcentcolon=f(x,y)$. If $Z=[0,\infty]$ let
$I^-_f:X\to [0,\infty]$ be defined by
$I^-_f(x)\vcentcolon=\int^-_Y f_x\, \mathrm{d}\nu=\int^-_Y f(x,y)\, \mathrm{d}\nu(y)$.

We can now formulate Tonelli's theorem.

\begin{theorem}[Tonelli's theorem]\label{thm:tonelli}
  Let $f:X\times Y\to[0,\infty]$ be a measurable function.
  Then~\href{https://github.com/leanprover-community/mathlib/blob/2cbaa9cc29bc44812b679810352ed383f35dcf75/src/measure_theory/prod.lean#L573}{\link}~\href{https://github.com/leanprover-community/mathlib/blob/2cbaa9cc29bc44812b679810352ed383f35dcf75/src/measure_theory/prod.lean#L610}{\link}\
  \[\int^-_{X\times Y} f\, \mathrm{d}(\mu\times\nu) =
  \int^-_X\!\!\! \int^-_Y f(x,y)\, \mathrm{d}\nu(y) \mathrm{d}\mu(x)
  = \int^-_Y\!\!\! \int^-_X f(x,y)\, \mathrm{d}\mu(x) \mathrm{d}\nu(y),\]
  and all the functions in the integrals above are measurable.~\href{https://github.com/leanprover-community/mathlib/blob/2cbaa9cc29bc44812b679810352ed383f35dcf75/src/measure_theory/prod.lean#L200}{\link}~\href{https://github.com/leanprover-community/mathlib/blob/2cbaa9cc29bc44812b679810352ed383f35dcf75/src/measure_theory/prod.lean#L226}{\link}\
\end{theorem}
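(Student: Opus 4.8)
The plan is the standard three‑stage bootstrap along the construction of the Lebesgue integral: prove the identity first when $f$ is the indicator of a measurable set, then for nonnegative simple functions by linearity, and finally for an arbitrary measurable $f:X\times Y\to[0,\infty]$ by monotone convergence, propagating the measurability of the inner integrals $I^-_f$ at each stage. By the symmetric form of~\eqref{eq:prod-def} recorded above, it suffices to prove the first equality $\int^-_{X\times Y} f\, \mathrm{d}(\mu\times\nu)=\int^-_X I^-_f\, \mathrm{d}\mu$ together with the measurability of $I^-_f$; the equality with the other iterated integral, and the measurability of the other inner integral, are then obtained by running the same argument with the roles of $X$ and $Y$ exchanged (equivalently, by pushing forward along the coordinate swap and invoking the identity once more).

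For the base case, take $f=\mathbf{1}_A$ with $A\subseteq X\times Y$ measurable. The Lebesgue integral of an indicator is the measure of the set, so the left‑hand side equals $(\mu\times\nu)(A)$. For the inner integral, $f_x=\mathbf{1}_{A_x}$ because $(x,y)\in A$ iff $y\in A_x$, hence $I^-_f(x)=\int^-_Y \mathbf{1}_{A_x}\, \mathrm{d}\nu=\nu(A_x)$; this is measurable in $x$ by "measurable_measure_prod_mk_left", and $\int^-_X \nu(A_x)\, \mathrm{d}\mu(x)=(\mu\times\nu)(A)$ is precisely the definition~\eqref{eq:prod-def}. For simple functions, write a nonnegative simple $g$ as a finite sum $\sum_i c_i\mathbf{1}_{A_i}$ with $c_i\in[0,\infty]$ and $A_i$ measurable. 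Both $\int^-_{X\times Y}$ and $g\mapsto I^-_g$ are additive and positively homogeneous over such combinations (the Lebesgue integral $\int^-_Y$ is linear on nonnegative functions, so $I^-_{g+h}=I^-_g+I^-_h$ pointwise, and similarly for scalar multiples), and a finite sum of measurable functions is measurable, so both the identity and the measurability of $I^-_g$ pass from indicators to all nonnegative simple functions.

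For a general measurable $f:X\times Y\to[0,\infty]$, pick an increasing sequence of nonnegative simple functions $g_n$ with $g_n(x,y)\uparrow f(x,y)$ pointwise (the usual dyadic truncation, which exists for any measurable nonnegative function). Now apply monotone convergence ("lintegral_supr") three times. First, $\int^-_{X\times Y} f\, \mathrm{d}(\mu\times\nu)=\sup_n \int^-_{X\times Y} g_n\, \mathrm{d}(\mu\times\nu)$. Second, for each fixed $x$ the slices satisfy $(g_n)_x\uparrow f_x$ pointwise, so $I^-_{g_n}(x)=\int^-_Y (g_n)_x\, \mathrm{d}\nu\uparrow \int^-_Y f_x\, \mathrm{d}\nu=I^-_f(x)$; since each $I^-_{g_n}$ is measurable by the simple‑function case and $n\mapsto I^-_{g_n}$ increases to $I^-_f$, part~\ref{part:inf_sup} of Lemma~\ref{lem:measurable-limit} shows $I^-_f=\sup_n I^-_{g_n}$ is measurable. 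Third, $\int^-_X I^-_f\, \mathrm{d}\mu=\sup_n \int^-_X I^-_{g_n}\, \mathrm{d}\mu$. Chaining these three equalities with the simple‑function identity $\int^-_{X\times Y} g_n\, \mathrm{d}(\mu\times\nu)=\int^-_X I^-_{g_n}\, \mathrm{d}\mu$ gives $\int^-_{X\times Y} f\, \mathrm{d}(\mu\times\nu)=\int^-_X I^-_f\, \mathrm{d}\mu$, as desired.

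The genuine analytic input — measurability of $x\mapsto\nu(A_x)$, which is where $\upsigma$-finiteness of $\nu$ is really used — is already isolated in "measurable_measure_prod_mk_left" and is assumed here. So the main obstacle in a formalization is the bookkeeping in the monotone‑convergence stage: producing the $g_n$ as honest terms of the simple‑function type, checking that forming slices commutes with taking pointwise suprema, verifying monotonicity of $n\mapsto I^-_{g_n}$, and discharging the measurability hypotheses of "lintegral_supr" at each of its three uses — in particular threading the measurability of $x\mapsto I^-_{g_n}(x)$ through from the simple‑function case into the second application.
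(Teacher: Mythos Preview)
Your proposal is correct and follows essentially the same approach as the paper: the paper packages the argument as an abstract ``induction on measurable functions into $[0,\infty]$'' (base case $c\chi_A$, closure under sums, closure under monotone suprema), which is exactly your three-stage bootstrap indicators $\to$ simple $\to$ general via monotone convergence, with the same base ingredient \texttt{measurable\_measure\_prod\_mk\_left} and the same three applications of the monotone convergence theorem. The only cosmetic difference is that the paper separates the proof of measurability of $I^-_f$ from the proof of the equality (running the induction twice), whereas you thread them through a single pass.
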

\begin{proof}
  We check the first equality and the measurability of $I^-_f$.
  The other claims follow by symmetry. For both these statement we use
  \emph{induction on measurable functions into $[0,\infty]$},~\href{https://github.com/leanprover-community/mathlib/blob/2cbaa9cc29bc44812b679810352ed383f35dcf75/src/measure_theory/integration.lean#L1711}{\link}\
  which states that to prove a statement $P(f)$ for all measurable functions $f$ into $[0,\infty]$,
  it is sufficient to show
  \begin{enumerate}
    \item if $\chi_A$ is the characteristic function of a measurable set and $c\in[0,\infty]$, we have $P(c\chi_A)$;
    \item if $f_1$ and $f_2$ are measurable functions with $P(f_1)$ and $P(f_2)$, then $P(f_1+f_2)$;
    \item if $(f_i)_i$ is a monotone sequence of measurable functions such that $P(f_i)$ for all $i$,
    we have $P(\sup_i f_i)$.
  \end{enumerate}
  We first prove that $I^-_f$ is measurable by induction on $f$.

  In the induction base, if $f=c\chi_A$, then $\int^-_Y f(x,y)\, \mathrm{d}\nu(y)=c\nu(A_x)$,
  so the measurability follows from "measurable_measure_prod_mk_left".
  The two induction steps follow from the additivity of the integral
  and the monotone convergence theorem.

  We now prove the first equality in Theorem~\ref{thm:tonelli}, also by induction on $f$.
  In the induction base, if $f=c\chi_A$,
  then it is not hard to see that both sides reduce to $c(\mu\times\nu)(A)$.
  The first induction step follows by applying additivity thrice,
  and the second induction step follows by applying the monotone convergence theorem thrice.
\end{proof}

Fubini's theorem (sometimes called Fubini--Tonelli's theorem) states something similar for the Bochner integral.
However, it is a bit harder to state and much harder to prove.
For a function $f:X\times Y \to E$ define $I_f:X\to E$ by
$I_f(x)\vcentcolon=\int_Y f_x\, \mathrm{d}\nu=\int_Y f(x,y)\, \mathrm{d}\nu(y)$.
\begin{theorem}[Fubini's theorem for the Bochner integral]\label{thm:fubini}
  Let $E$ be a second-countable Banach space and $f:X\times Y \to E$ be a function.
  \begin{enumerate}
    \item\label{part:fubini-1}
    If $f$ is $\mu\times\nu$-a.e.~measurable then $f$ is integrable
    iff the following two conditions hold:~\href{https://github.com/leanprover-community/mathlib/blob/2cbaa9cc29bc44812b679810352ed383f35dcf75/src/measure_theory/prod.lean#L699}{\link}\
    \begin{itemize}
      \item for almost all $x\in X$ the function $f_x$ is $\nu$-integrable;
      \item The function $I_{\|f\|}$ is $\mu$-integrable.
    \end{itemize}
    \item\label{part:fubini-2} If $f$ is integrable, then~\href{https://github.com/leanprover-community/mathlib/blob/2cbaa9cc29bc44812b679810352ed383f35dcf75/src/measure_theory/prod.lean#L846}{\link}~\href{https://github.com/leanprover-community/mathlib/blob/2cbaa9cc29bc44812b679810352ed383f35dcf75/src/measure_theory/prod.lean#L870}{\link}\
    \[\int_{X\times Y} f\, \mathrm{d}(\mu\times\nu) =
    \int_X\! \int_Y f(x,y)\, \mathrm{d}\nu(y) \mathrm{d}\mu(x)
    = \int_Y\! \int_X f(x,y)\, \mathrm{d}\mu(x) \mathrm{d}\nu(y),\]
  \end{enumerate}
  Moreover, all the functions in the integrals above are a.e.~measurable.~\href{https://github.com/leanprover-community/mathlib/blob/2cbaa9cc29bc44812b679810352ed383f35dcf75/src/measure_theory/prod.lean#L290}{\link}~\href{https://github.com/leanprover-community/mathlib/blob/2cbaa9cc29bc44812b679810352ed383f35dcf75/src/measure_theory/prod.lean#L303}{\link}\
\end{theorem}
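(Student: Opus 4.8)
The plan is to deduce part~\ref{part:fubini-1} from Tonelli's theorem applied to the nonnegative function $\|f\|$, and to prove the equalities in part~\ref{part:fubini-2} by a density argument in $L^1$, bootstrapping from indicator functions. A recurring ingredient is a \emph{slicing principle}: if $h\colon X\times Y\to E$ is $\mu\times\nu$-a.e.~measurable then $h_x$ is $\nu$-a.e.~measurable for $\mu$-almost all $x$. To see this, pick a measurable $g$ with $h=_{\mu\times\nu}g$; the set $\{h\ne g\}$ lies in a measurable $\mu\times\nu$-null set $N$, and \eqref{eq:prod-def} applied to $N$ gives $\int^-_X\nu(N_x)\,\mathrm{d}\mu(x)=0$, so $\nu(N_x)=0$ for $\mu$-a.a.~$x$; for such $x$ we have $h_x=_\nu g_x$ with $g_x$ measurable. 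The same computation shows that a $\mu\times\nu$-null set has $\nu$-null slices for $\mu$-a.a.~$x$.

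For part~\ref{part:fubini-1}, recall that $f$ is integrable iff it is $\mu\times\nu$-a.e.~measurable and $\int^-_{X\times Y}\|f\|\,\mathrm{d}(\mu\times\nu)<\infty$. Applying Tonelli's theorem to a measurable representative of $\|f\|$ yields $\int^-_{X\times Y}\|f\|\,\mathrm{d}(\mu\times\nu)=\int^-_X I^-_{\|f\|}(x)\,\mathrm{d}\mu(x)$ with $I^-_{\|f\|}$ measurable, where $I^-_{\|f\|}(x)=\int^-_Y\|f(x,y)\|\,\mathrm{d}\nu(y)$. By the slicing principle, for $\mu$-a.a.~$x$ the slice $f_x$ is $\nu$-a.e.~measurable, and for such $x$ one has $f_x$ integrable $\iff I^-_{\|f\|}(x)<\infty$, in which case $I_{\|f\|}(x)=I^-_{\|f\|}(x)$ (whereas $I_{\|f\|}(x)=0$ otherwise, by the convention for non-integrable functions). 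Hence if $f$ is integrable then $\int^-_X I^-_{\|f\|}\,\mathrm{d}\mu<\infty$ forces $I^-_{\|f\|}(x)<\infty$ for a.a.~$x$, giving the first condition, and then $I_{\|f\|}=_\mu I^-_{\|f\|}$ has finite $\mu$-integral, giving the second. Conversely the two conditions give $I_{\|f\|}=_\mu I^-_{\|f\|}$ with $\int^-_X I^-_{\|f\|}\,\mathrm{d}\mu<\infty$, whence $\int^-_{X\times Y}\|f\|\,\mathrm{d}(\mu\times\nu)<\infty$ and $f$ is integrable.

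For part~\ref{part:fubini-2}, it suffices to establish the single identity $\int_{X\times Y}f\,\mathrm{d}(\mu\times\nu)=\int_X I_f(x)\,\mathrm{d}\mu(x)$ for integrable $f$; the second equality then follows by symmetry, pushing the integral forward along the coordinate swap $X\times Y\to Y\times X$ (which sends $\mu\times\nu$ to $\nu\times\mu$). By part~\ref{part:fubini-1} and the slicing principle, $\Phi(f)\vcentcolon=\int_X I_f\,\mathrm{d}\mu$ descends to a linear map $L^1(X\times Y,\mu\times\nu;E)\to E$, and $\|\Phi(f)\|\le\int_X\|I_f(x)\|\,\mathrm{d}\mu(x)\le\int_X I_{\|f\|}(x)\,\mathrm{d}\mu(x)=\int^-_{X\times Y}\|f\|\,\mathrm{d}(\mu\times\nu)=\|f\|_{L^1}$, so $\Phi$ is continuous. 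The functional $f\mapsto\int_{X\times Y}f\,\mathrm{d}(\mu\times\nu)$ is likewise continuous on $L^1$, so it is enough to check that the two agree on the (dense) simple $L^1$ functions, and by linearity on $f=\chi_S\cdot e$ with $S\subseteq X\times Y$ measurable, $(\mu\times\nu)(S)<\infty$, and $e\in E$. Here $\nu(S_x)<\infty$ for $\mu$-a.a.~$x$, so $I_f(x)=\nu(S_x)\,e$ for a.a.~$x$ and $\Phi(f)=\bigl(\int^-_X\nu(S_x)\,\mathrm{d}\mu(x)\bigr)e=(\mu\times\nu)(S)\,e=\int_{X\times Y}\chi_S\cdot e\,\mathrm{d}(\mu\times\nu)$.

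The a.e.~measurability of $f_x$ for a.a.~$x$ in the final claim is exactly the slicing principle, and the a.e.~measurability of $I_f$ follows by choosing simple $L^1$ functions $g_n\to f$ in $L^1$, observing that each $I_{g_n}$ is a.e.~measurable and that $\|I_{g_n}-I_f\|_{L^1(\mu)}\le\|g_n-f\|_{L^1}\to0$, passing to an a.e.~convergent subsequence, and applying Lemma~\ref{lem:measurable-limit}. The main obstacle is the bookkeeping of almost-everywhere hypotheses under slicing: $I_f$ and $I_{\|f\|}$ are only defined up to $\mu$-a.e.~equivalence and agree with $I^-_{\|f\|}$ only where the slice is integrable, so at each step one must track the conull set of ``good'' $x$ and invoke the slicing principle (and its null-set variant). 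Once $\Phi$ is known to be a well-defined continuous linear map --- which is precisely where part~\ref{part:fubini-1} is needed --- the density argument itself is routine.
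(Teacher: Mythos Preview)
Your treatment of Parts~\ref{part:fubini-1} and~\ref{part:fubini-2} matches the paper's strategy: integrability is reduced to Tonelli by comparing $I^-_{\|f\|}$ with $I_{\|f\|}$, and the identity is obtained by an $L^1$-density argument checked on indicators. The paper packages the latter as an explicit four-clause induction principle (indicators, sums, a.e.\ equality, closedness in $L^1$), whereas you phrase it as ``$\Phi$ is a bounded linear functional agreeing with the Bochner integral on a dense set''; these are the same argument.

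The real difference is in how you obtain the a.e.\ measurability of $I_f$, and there your write-up has a logical wrinkle. You need this measurability \emph{before} Part~\ref{part:fubini-2} (otherwise $\Phi(f)=\int_X I_f\,\mathrm{d}\mu$ is not defined), yet you prove it last, and the proof itself invokes $\|I_{g_n}-I_f\|_{L^1(\mu)}$ before $I_f$ is known to be measurable. This can be repaired by reading that norm as the lower Lebesgue integral $\int^-_X\|I_{g_n}(x)-I_f(x)\|\,\mathrm{d}\mu(x)$, bounding it by $\|g_n-f\|_{L^1}$ via Tonelli and the a.e.\ pointwise inequality $\|I_{g_n}(x)-I_f(x)\|\le I^-_{\|g_n-f\|}(x)$, and then extracting an a.e.\ convergent subsequence by a Borel--Cantelli argument. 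The paper avoids this detour: for measurable $f$ it takes the pointwise approximating simple functions $s_n$ with the uniform bound $\|s_n(z)\|\le 2\|f(z)\|$, applies the dominated convergence theorem (with dominator $2\|f_x\|$) to get $I_{s_n}(x)\to I_f(x)$ at every $x$ where $f_x$ is integrable, and concludes by Lemma~\ref{lem:measurable-limit}. This yields genuine measurability of $I_f$ when $f$ is measurable, uses no $L^1$ machinery, and is established \emph{first}, so that Parts~\ref{part:fubini-1} and~\ref{part:fubini-2} can cite it freely. Your $L^1$-subsequence route is also valid once the outer-integral interpretation is made explicit, but it is less self-contained and forces you to reorder the proof.
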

Note that Part~\ref{part:fubini-1} also has a symmetric counterpart,~\href{https://github.com/leanprover-community/mathlib/blob/2cbaa9cc29bc44812b679810352ed383f35dcf75/src/measure_theory/prod.lean#L707}{\link}\
which we omit here. Note that the middle term in Part~\ref{part:fubini-2} is just
$\int_X I_f\, \mathrm{d}\mu(x)$
\begin{proof}
  \textbf{Measurability}.
  First suppose that $f$ is measurable. In this case, we show that
  the function $I_f$ is measurable.
  We can approximate any measurable function $f$ into a $E$ by a
  sequence $(s_n)_n$ of simple functions~\href{https://github.com/leanprover-community/mathlib/blob/2cbaa9cc29bc44812b679810352ed383f35dcf75/src/measure_theory/simple_func_dense.lean#L106}{\link}\ such that for all $z$ we have
  $\|s_n(z)\|\le 2\|f(z)\|$~\href{https://github.com/leanprover-community/mathlib/blob/2cbaa9cc29bc44812b679810352ed383f35dcf75/src/measure_theory/simple_func_dense.lean#L162}{\link}\ and $s_n(z)\to f(z)$ as $n\to\infty$.~\href{https://github.com/leanprover-community/mathlib/blob/2cbaa9cc29bc44812b679810352ed383f35dcf75/src/measure_theory/simple_func_dense.lean#L132}{\link}\
  Now define $g_n:X\to E$ by
  \[g_n(x)=\begin{cases}I_{s_n}(x) &
    \text{if $f_x$ is $\nu$-integrable} \\
  0 & \text{otherwise.}\end{cases}\]
  Note that $g_n$ is similar to $I_{s_n}$, except that we set it to $0$ whenever
  $f_x$ is not integrable. This is required to ensure that $g_n$ converges to $I_f$ (see below).
  We first check that $g_n$ is measurable.
  Note that the set
  \[\{ x \mid \text{$f_x$ is $\nu$-integrable}\} =
  \{ x \mid \textstyle\int_Y \|f(x,y)\|\, \mathrm{d}\nu(y)<\infty\} \]
  is measurable, using Theorem~\ref{thm:tonelli}. Therefore, to show that $g_n$ is measurable,
  it suffices to show that $I_{s_n}$ is measurable,
  which is true, since it is a finite sum of functions of the form $x\mapsto c\nu(A_x)$
  (using that $s_n$ is simple).

  Secondly, we check that $g_n$ converges to $I_f$ pointwise. Let $x\in X$.
  If $f_x$ is not integrable, it is trivially true, so assume that $f_x$ is integrable.
  Then $(s_n)_x$ is integrable for all $n$, and by the dominated convergence theorem
  $I_{s_n}(x)$ converges to $I_f(x)$.

  We conclude that $I_f$ is measurable by Lemma~\ref{lem:measurable-limit}.

  Finally, if $f$ is $\mu\times\nu$-a.e.~measurable, then it is not hard to show that $I_f$ is
  $\mu$-a.e.~measurable.

  \textbf{Integrability}.
  We now prove Part~\ref{part:fubini-1}.
  By Tonelli's theorem, we know that $f$ is $\mu\times\nu$-integrable iff
  $I^-_{\|f\|}$ is $\mu$-integrable.
  Note that $I^-_{\|f\|}$ is almost the same as $I_{\|f\|}$.
  The difference is that when $f_x$ is not integrable, $I_{\|f\|}(x)=0$ but $I^-_{\|f\|}=\infty$.
  This means that if $I^-_{\|f\|}$ is $\mu$-integrable,
  then $f_x$ must almost always be $\nu$-integrable
  (otherwise the integrand is infinite on a set of positive measure) and
  $I_{\|f\|}$ must be $\mu$-integrable.
  For the other direction in Part~\ref{part:fubini-1},
  if $f_x$ is almost always integrable, then $I_{\|f\|}=I^-_{\|f\|}$ almost everywhere,
  so if the former is $\mu$-integrable, then so is the latter.

  \textbf{Equality}.
  To prove the first equality in Part~\ref{part:fubini-2}, we use the following
  \emph{induction principle for integrable functions}.~\href{https://github.com/leanprover-community/mathlib/blob/2cbaa9cc29bc44812b679810352ed383f35dcf75/src/measure_theory/set_integral.lean#L338}{\link}\
  It states that to prove a statement $P(f)$ for all $\tau$-integrable functions $f:Z\to E$,
  it is sufficient to show
  \begin{itemize}
    \item if $A$ is measurable with $\tau(A)<\infty$ and $c\in E$, then $P(c\chi_A)$;
    \item if $f_1$ and $f_2$ are $\tau$-integrable functions with $P(f_1)$ and $P(f_2)$, then $P(f_1+f_2)$;
    \item If $f_1$ is integrable with $P(f_1)$ and $f_1 = f_2$ $\tau$-a.e.~then $P(f_2)$.
    \item The set $\{ f \in L^1(Z,\tau;E) \mid P(f) \}$ is closed.
  \end{itemize}
  We now prove that
  \begin{align}\label{eq:fubini}
    \int_{X\times Y} f\, \mathrm{d}(\mu\times\nu) &=
  \int_X\! \int_Y f(x,y)\, \mathrm{d}\nu(y) \mathrm{d}\mu(x)
  \end{align}
  by induction on $f$.

  \begin{itemize}
    \item If $f=c\chi_A$ then it is easy to see that both sides of \eqref{eq:fubini}
    reduce to $c\cdot(\mu\times\nu)(A)$.
    \item It is clear that if \eqref{eq:fubini} holds for $f_1$ and $f_2$, then it also holds for $f_1+f_2$.
    \item Suppose that \eqref{eq:fubini} holds for $f_1$ and that $f_1=f_2$ almost everywhere.
    Since integrals are defined up to a.e.~equality, we know that
    \[\int_{X\times Y} f_1\, \mathrm{d}(\mu\times\nu)=\int_{X\times Y} f_2\, \mathrm{d}(\mu\times\nu).\]
    It therefore suffices to show that
    \[\int_X\! \int_Y f_1(x,y)\, \mathrm{d}\nu(y) \mathrm{d}\mu(x) = \int_X\! \int_Y f_2(x,y)\, \mathrm{d}\nu(y) \mathrm{d}\mu(x).\]
    For this it is sufficient to show that $I_{f_1}(x)=I_{f_2}(x)$ for almost all $x$.
    Since $f_1(z)=f_2(z)$ for almost all $z$, it is not hard to see~\href{https://github.com/leanprover-community/mathlib/blob/2cbaa9cc29bc44812b679810352ed383f35dcf75/src/measure_theory/prod.lean#L401}{\link}\ that
    for almost all $x$ we have that for almost all $y$ the equality $f_1(x,y)=f_2(x,y)$ holds.
    From this we easily get that $I_{f_1}(x)=I_{f_2}(x)$ for almost all $x$.
    \item We need to show that the set of $f \in L^1(Z,\tau;E)$ for which \eqref{eq:fubini} holds is closed.
    We know that taking the integral of an $L^1$ function is a continuous operation,~\href{https://github.com/leanprover-community/mathlib/blob/2cbaa9cc29bc44812b679810352ed383f35dcf75/src/measure_theory/bochner_integration.lean#L1062}{\link}\
    which means that both sides of \eqref{eq:fubini} are continuous in $f$,
    % for the RHS some work is required
    which means that the set where these functions are equal is closed.
  \end{itemize}
\end{proof}

\section{Existence of The Haar Measure}
\label{section:haar-def}

In this section, we define the left Haar measure,
and discuss some design decisions when formalizing these definitions.
We also discuss how to obtain a right Haar measure from this definition.

Throughout this section, we assume that "G" is a locally compact group,
equipped with the Borel $\upsigma$-algebra. We will write the group operation multiplicatively.
We will define the Haar measure and show that it is a left invariant regular measure on "G".
In the formalization,
all intermediate definitions and results are given with weaker conditions on "G",
following the design in \textsf{mathlib} that all results should be in the most general form (within reason).

In Lean, the precise conditions on "G" are written as follows.
\begin{lstlisting}
{G : Type*} [topological_space G] [t2_space G] [locally_compact_space G]
  [group G] [measurable_space G] [borel_space G] [topological_group G]
\end{lstlisting}

The concept of a left invariant measure is defined as follows in \textsf{mathlib}.~\href{https://github.com/leanprover-community/mathlib/blob/2cbaa9cc29bc44812b679810352ed383f35dcf75/src/measure_theory/group.lean#L38}{\link}\
\begin{lstlisting}
def is_left_invariant (μ : set G → ennreal) : Prop :=
∀ (g : G) {A : set G} (h : is_measurable A), μ ((λ h, g * h) ⁻¹' A) = μ A
\end{lstlisting}

Note that the preimage \lstinline{(λ h, g * h) ⁻¹' A} is equal to
the image \lstinline{(λ h, g⁻¹ * h) '' A}.
We use preimages for translations, since preimages are nicer to work with.
For example, the fact that \lstinline{(λ h, g * h) ⁻¹' A} is measurable
follows directly from the fact that multiplication (on the left) is a measurable function.

Next we give the definition of a regular measure~\href{https://github.com/leanprover-community/mathlib/blob/2cbaa9cc29bc44812b679810352ed383f35dcf75/src/measure_theory/borel_space.lean#L1387}{\link}\
\begin{lstlisting}
structure regular (μ : measure X) : Prop :=
(lt_top_of_is_compact : ∀ ⦃K : set X⦄, is_compact K → μ K < ⊤)
(outer_regular : ∀ ⦃A : set X⦄, is_measurable A →
  (⨅ (U : set X) (h : is_open U) (h2 : A ⊆ U), μ U) ≤ μ A)
(inner_regular : ∀ ⦃U : set X⦄, is_open U →
  μ U ≤ ⨆ (K : set X) (h : is_compact K) (h2 : K ⊆ U), μ K)
\end{lstlisting}
If "μ" is a regular measure,
then the inequalities in the last two fields are always equalities.
This means that a measure "μ" is regular if it is finite on compact sets,
its value on a measurable sets "A" is equal to
the infimum ("⨅") of its value on all open sets containing "A",
and finally its value on an open set "U" is the supremum ("⨆") of its values on
all compacts subsets of "U".

We are now ready to start the definition of the Haar measure.
Given a compact set "K" in "G",
we start by giving a rough approximation of the ``size'' of "K"
by comparing it to a reference set "V", which is an open neighborhood of "(1 : G)".
We can consider all left translates "(λ h, g * h) ⁻¹' V" of "V", which is an
open covering of "K". Since "K" is compact, we only need finitely many left
translates of "V" to cover "K".~\href{https://github.com/leanprover-community/mathlib/blob/2cbaa9cc29bc44812b679810352ed383f35dcf75/src/topology/algebra/group.lean#L563}{\link}\
Let "index K V" be the smallest number of left-translates of "V"
needed to cover "K".
This is often denoted $(K : V)$.
We did not use this in Lean code, since it conflicts with the typing-notation in Lean.
In Lean, this notion is defined for arbitrary sets "K" and "V",
and it is defined to be 0 if there is no finite number of left translates covering "K":~\href{https://github.com/leanprover-community/mathlib/blob/2cbaa9cc29bc44812b679810352ed383f35dcf75/src/measure_theory/haar_measure.lean#L73}{\link}\
\begin{lstlisting}
def index (K V : set G) : ℕ :=
Inf (finset.card '' {t : finset G | K ⊆ ⋃ g ∈ t, (λ h, g * h) ⁻¹' V })
\end{lstlisting}
For the rest of this section, we fix "K₀" as an arbitrary compact set with non-empty interior (in \textsf{mathlib} this is denoted "K₀ : positive_compacts G").
We then define a weighted version of the index:~\href{https://github.com/leanprover-community/mathlib/blob/2cbaa9cc29bc44812b679810352ed383f35dcf75/src/measure_theory/haar_measure.lean#L89}{\link}\
\begin{lstlisting}
def prehaar (K₀ U : set G) (K : compacts G) : ℝ :=
(index K U : ℝ) / index K₀ U
\end{lstlisting}
\begin{lemma}
\label{lem:prehaar-properties}
Denote "prehaar K₀ U K" by $h_U(K)$ (recall that "K₀" is fixed). Then we have
\begin{itemize}
\item $(K : U) \le (K : K_0)\cdot (K_0 : U)$;~\href{https://github.com/leanprover-community/mathlib/blob/2cbaa9cc29bc44812b679810352ed383f35dcf75/src/measure_theory/haar_measure.lean#L131}{\link}\
\item $0\le h_U(K) \le (K : K_0)$;~\href{https://github.com/leanprover-community/mathlib/blob/2cbaa9cc29bc44812b679810352ed383f35dcf75/src/measure_theory/haar_measure.lean#L230}{\link}\
\item $h_U(K_0)=1$;~\href{https://github.com/leanprover-community/mathlib/blob/2cbaa9cc29bc44812b679810352ed383f35dcf75/src/measure_theory/haar_measure.lean#L249}{\link}\
\item $h_U(xK)=h_U(K)$;~\href{https://github.com/leanprover-community/mathlib/blob/2cbaa9cc29bc44812b679810352ed383f35dcf75/src/measure_theory/haar_measure.lean#L265}{\link}\
\item if $K\subseteq K'$ then $h_U(K)\le h_U(K')$;~\href{https://github.com/leanprover-community/mathlib/blob/2cbaa9cc29bc44812b679810352ed383f35dcf75/src/measure_theory/haar_measure.lean#L242}{\link}\
\item if $h_U(K\cup K')\le h_U(K)+h_U(K')$~\href{https://github.com/leanprover-community/mathlib/blob/2cbaa9cc29bc44812b679810352ed383f35dcf75/src/measure_theory/haar_measure.lean#L253}{\link}\ and equality holds if
$KU^{-1}\cap K'U^{-1}=\emptyset$.~\href{https://github.com/leanprover-community/mathlib/blob/2cbaa9cc29bc44812b679810352ed383f35dcf75/src/measure_theory/haar_measure.lean#L260}{\link}\
\end{itemize}
\end{lemma}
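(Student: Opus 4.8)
The plan is to reduce every assertion to a statement about the natural number $(K:U)=\mathrm{index}\ K\ U$, since $h_U(K)=(K:U)/(K_0:U)$ and the denominator $(K_0:U)$ is a fixed positive integer. So the first thing I would record is why $(K_0:U)\ge 1$: because $U$ is an open neighbourhood of $1$, the left translates $g^{-1}U=(\lambda h,gh)^{-1}U$ are open and cover $G$ (for $x\in G$ one may take $g=x^{-1}$, since $1\in U$), so the compact set $K_0$ is covered by finitely many of them, while the empty family does not cover $K_0$ because $K_0$ has nonempty interior and is in particular nonempty. Hence dividing by $(K_0:U)$ is harmless and each real inequality in the lemma will follow from the corresponding statement about $(\cdot:U)$.

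Next I would dispatch the ``monotonicity-type'' facts directly from the definition of $\mathrm{index}$ as an infimum of cardinalities of covering finite sets. If $K\subseteq K'$, any finite family of translates of $U$ covering $K'$ also covers $K$, so $(K:U)\le (K':U)$. The union of a cover of $K$ and a cover of $K'$ covers $K\cup K'$ with cardinality at most the sum, so $(K\cup K':U)\le (K:U)+(K':U)$. Right multiplication $g\mapsto gx^{-1}$ turns a cover of $K$ into a cover of $xK$ of the same size (and using $x^{-1}$ gives the reverse inclusion), so $(xK:U)=(K:U)$. Finally, if $s$ realizes $(K:K_0)$ and $t$ realizes $(K_0:U)$, then $g^{-1}K_0\subseteq\bigcup_{h\in t}(hg)^{-1}U$ for each $g\in s$, so $\{hg\mid h\in t,\,g\in s\}$ covers $K$ and has at most $|s|\,|t|$ elements, giving $(K:U)\le (K:K_0)\,(K_0:U)$. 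Dividing by $(K_0:U)$ then yields $0\le h_U(K)\le (K:K_0)$ (the upper bound from the last point), $h_U(K_0)=1$, $h_U(xK)=h_U(K)$, monotonicity of $h_U$, and subadditivity of $h_U$.

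The hard part will be the equality $h_U(K\cup K')=h_U(K)+h_U(K')$ under the hypothesis $KU^{-1}\cap K'U^{-1}=\emptyset$. With subadditivity already in hand it suffices to prove $(K:U)+(K':U)\le (K\cup K':U)$, and I would argue by a charging argument: take a covering finite set $t$ of $K\cup K'$ of minimal cardinality and split it into $t_K=\{g\in t\mid g^{-1}U\cap K\ne\emptyset\}$ and $t_{K'}=\{g\in t\mid g^{-1}U\cap K'\ne\emptyset\}$. Since $t$ covers $K\cup K'\supseteq K$, every point of $K$ lies in $g^{-1}U$ for some $g\in t$, which then lies in $t_K$, so $t_K$ covers $K$ and $|t_K|\ge (K:U)$, and similarly $|t_{K'}|\ge (K':U)$. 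The crucial observation is that $g^{-1}U\cap K\ne\emptyset$ forces $g^{-1}\in KU^{-1}$ — if $k\in K$ with $gk\in U$, then $g^{-1}=k(gk)^{-1}\in KU^{-1}$ — and likewise $g^{-1}U\cap K'\ne\emptyset$ forces $g^{-1}\in K'U^{-1}$; the hypothesis therefore makes $t_K$ and $t_{K'}$ disjoint, so $(K:U)+(K':U)\le|t_K|+|t_{K'}|=|t_K\cup t_{K'}|\le|t|=(K\cup K':U)$, and dividing by $(K_0:U)$ finishes. In the formalization I expect the only real friction to be the bookkeeping around the preimage notation — keeping straight that the relevant translate is $g^{-1}U$ and that the charging condition lands on $g^{-1}$ rather than on $g$ — rather than any conceptual obstacle.
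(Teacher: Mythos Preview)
Your argument is correct in every part, and this is exactly the standard route: reduce each claim about $h_U$ to the corresponding inequality for the integer index $(\cdot:U)$, using that $(K_0:U)\ge 1$, and handle the additivity case by splitting a minimal cover of $K\cup K'$ according to which translates meet $K$ versus $K'$ and using the hypothesis $KU^{-1}\cap K'U^{-1}=\emptyset$ to show these two pieces are disjoint. The paper does not write out a proof of this lemma at all --- it simply states the properties and links to the Lean formalization --- so there is nothing to compare beyond noting that your proof is the expected one and matches what the formalization does.
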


To define the left Haar measure, we next want to define the ``limit''
of this quotient as "U" becomes a smaller and smaller open neighborhoods of 1.
This is not an actual limit, but we will emulate it by constructing a product space
that contains all these functions, and then use a compactness argument to find a ``limit''
in a large product space.

Consider the product
\[\prod_{K\subseteq G \text{ compact}}[0,(K : K_0)]\]
In Lean this is defined as a subspace of the topological space "compacts G → ℝ":~\href{https://github.com/leanprover-community/mathlib/blob/2cbaa9cc29bc44812b679810352ed383f35dcf75/src/measure_theory/haar_measure.lean#L100}{\link}\
\begin{lstlisting}
def haar_product (K₀ : set G) : set (compacts G → ℝ) :=
pi univ (λ K : compacts G, Icc 0 (index K K₀))
\end{lstlisting}
Here "pi univ" is the product of sets and "Icc" is an interval that is closed on both sides.

Note that by Tychonoff's theorem~\href{https://github.com/leanprover-community/mathlib/blob/2cbaa9cc29bc44812b679810352ed383f35dcf75/src/topology/subset_properties.lean#L603}{\link}\ "haar_product K₀" is compact,
and that every function "prehaar K₀ U : compacts G → ℝ" determines a point in "haar_product K₀".~\href{https://github.com/leanprover-community/mathlib/blob/2cbaa9cc29bc44812b679810352ed383f35dcf75/src/measure_theory/haar_measure.lean#L273}{\link}\

Given a open neighborhood "V" of 1, we can define the collection of points determined by
"prehaar K₀ U" for all "U ⊆ V" and take its closure:
\begin{lstlisting}
def cl_prehaar (K₀ : set G) (V : open_nhds_of (1 : G)) :
  set (compacts G → ℝ) :=
closure (prehaar K₀ '' { U : set G | U ⊆ V ∧ is_open U ∧ (1 : G) ∈ U })
\end{lstlisting}
Now we claim that the intersection of "cl_prehaar K₀ V" for all open neighborhoods "V" of 1
is non-empty.
\begin{lstlisting}
lemma nonempty_Inter_cl_prehaar (K₀ : positive_compacts G) : nonempty
  (haar_product K₀ ∩ ⋂ (V : open_nhds_of (1 : G)), cl_prehaar K₀ V)
\end{lstlisting}
\begin{proof}
Since "haar_product K₀" is compact, it is sufficient to show that if we have a finite collection of
open neighborhoods "t : finset (open_nhds_of 1)" the set
"haar_product K₀ ∩ ⋂ (V ∈ t), cl_prehaar K₀ V" is non-empty.~\href{https://github.com/leanprover-community/mathlib/blob/2cbaa9cc29bc44812b679810352ed383f35dcf75/src/topology/subset_properties.lean#L182}{\link}\ %is_compact.inter_Inter_nonempty
In this case we can explicitly give an element, namely "prehaar K₀ V₀", where "V₀ = ⋂ (V ∈ t), V".
\end{proof}

Finally, we can choose an arbitrary element in the set of the previous lemma,
which we call "chaar K₀", since it measures the size of the compact sets of "G":~\href{https://github.com/leanprover-community/mathlib/blob/2cbaa9cc29bc44812b679810352ed383f35dcf75/src/measure_theory/haar_measure.lean#L300}{\link}\
\begin{lstlisting}
def chaar (K₀ : positive_compacts G) (K : compacts G) : ℝ :=
classical.some (nonempty_Inter_cl_prehaar K₀) K
\end{lstlisting}

\begin{lemma}
  \label{lem:chaar-properties}
  Denote "chaar K₀ K" by $h(K)$. Then we have
  \begin{itemize}
  \item $0\le h(K)$~\href{https://github.com/leanprover-community/mathlib/blob/2cbaa9cc29bc44812b679810352ed383f35dcf75/src/measure_theory/haar_measure.lean#L311}{\link}\ and $h(\emptyset)=0$~\href{https://github.com/leanprover-community/mathlib/blob/2cbaa9cc29bc44812b679810352ed383f35dcf75/src/measure_theory/haar_measure.lean#L314}{\link}\ and $h(K_0)=1$;~\href{https://github.com/leanprover-community/mathlib/blob/2cbaa9cc29bc44812b679810352ed383f35dcf75/src/measure_theory/haar_measure.lean#L324}{\link}\
  \item $h(xK)=h(K)$;~\href{https://github.com/leanprover-community/mathlib/blob/2cbaa9cc29bc44812b679810352ed383f35dcf75/src/measure_theory/haar_measure.lean#L394}{\link}\
  \item if $K\subseteq K'$ then $h(K)\le h(K')$ (monotonicity);~\href{https://github.com/leanprover-community/mathlib/blob/2cbaa9cc29bc44812b679810352ed383f35dcf75/src/measure_theory/haar_measure.lean#L336}{\link}\
  \item if $h(K\cup K')\le h(K)+h(K')$ (subadditivity)~\href{https://github.com/leanprover-community/mathlib/blob/2cbaa9cc29bc44812b679810352ed383f35dcf75/src/measure_theory/haar_measure.lean#L349}{\link}\ and equality holds if
  $K\cap K'=\emptyset$ (additivity).~\href{https://github.com/leanprover-community/mathlib/blob/2cbaa9cc29bc44812b679810352ed383f35dcf75/src/measure_theory/haar_measure.lean#L364}{\link}\
  \end{itemize}
\end{lemma}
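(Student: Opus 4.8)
The plan is to derive every clause of the lemma from a single fact about $h = \mathrm{chaar}(K_0)$: by construction it lies in the closed set $\mathcal C_V$ for \emph{every} open neighborhood $V$ of $1$, where
\[
  \mathcal C_V = \overline{\{\, h_U \mid U\text{ open},\ 1 \in U,\ U \subseteq V \,\}}
\]
is the closure, taken inside the compact product space $\prod_K [0,(K:K_0)] \subseteq (\mathrm{compacts}\,G \to \R)$, of the points $h_U = \mathrm{prehaar}(K_0,U,{-})$. The first step I would isolate is the abstract principle this yields: if $P$ is a property of functions $f : \mathrm{compacts}\,G \to \R$ such that $\{\, f \mid P(f)\,\}$ is closed in the product topology and $P(h_U)$ holds for every open $U$ with $1 \in U \subseteq V$, then $\mathcal C_V \subseteq \{\, f \mid P(f)\,\}$, and hence $P(h)$. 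Each clause of the lemma is an instance of this, with an appropriate $V$ and a reference to Lemma~\ref{lem:prehaar-properties}.

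For all clauses except additivity I would take $V$ to be any fixed neighborhood, since these properties hold for $h_U$ for \emph{all} $U$: nonnegativity $0 \le f(K)$, the vanishing $f(\emptyset) = 0$ (immediate from $(\emptyset:U) = 0$), the normalization $f(K_0) = 1$, invariance $f(xK) = f(K)$, monotonicity $f(K) \le f(K')$ when $K \subseteq K'$, and subadditivity $f(K \cup K') \le f(K) + f(K')$. The only genuine verification is that each of these defines a closed subset; this is routine, since evaluation $f \mapsto f(K)$ at a fixed compact $K$ is continuous on $\mathrm{compacts}\,G \to \R$, and closedness is preserved by finite intersections and by preimages of closed sets such as $\{(a,b) \mid a \le b\}$ or $\{(a,b) \mid a = b\}$.

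The interesting clause is additivity: $h(K \cup K') = h(K) + h(K')$ when $K \cap K' = \emptyset$. Here I cannot use all $U$, because Lemma~\ref{lem:prehaar-properties} only guarantees $h_U(K \cup K') = h_U(K) + h_U(K')$ under the stronger hypothesis $KU^{-1} \cap K'U^{-1} = \emptyset$. So I would first prove the topological statement that two disjoint compact sets $K, K'$ in a Hausdorff topological group admit an open neighborhood $V$ of $1$ with $KV^{-1} \cap K'V^{-1} = \emptyset$. Given such a $V$, every open $U$ with $1 \in U \subseteq V$ satisfies $KU^{-1} \subseteq KV^{-1}$ and $K'U^{-1} \subseteq K'V^{-1}$, hence $KU^{-1} \cap K'U^{-1} = \emptyset$ and $h_U(K \cup K') = h_U(K) + h_U(K')$; since $\{\, f \mid f(K \cup K') = f(K) + f(K') \,\}$ is closed, it contains $\mathcal C_V$, and therefore $h$.

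I expect the topological separation lemma to be the main obstacle: one proves it from compactness of $K$ and $K'$ together with continuity of multiplication and inversion — for instance, $K^{-1}K'$ is a compact set not containing $1$, so its complement is an open neighborhood of $1$ into which $WW^{-1}$ fits for a small enough $W \ni 1$, and then $V = W^{-1}$ works; some of this groundwork may already be available in \textsf{mathlib}'s topological-groups library. The rest of the argument is bookkeeping: matching each clause against Lemma~\ref{lem:prehaar-properties} and checking closedness of the relevant subsets of $\mathrm{compacts}\,G \to \R$.
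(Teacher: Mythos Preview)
Your proposal is correct and follows exactly the approach sketched in the paper: use continuity of the evaluation maps $f \mapsto f(K)$ to transfer each property from the approximants $h_U$ (via Lemma~\ref{lem:prehaar-properties}) through the closure $\mathcal C_V$ to $h$, choosing for the additivity clause a sufficiently small $V$ so that $KU^{-1}\cap K'U^{-1}=\emptyset$ for all admissible $U\subseteq V$. The only point you add beyond the paper's sketch is the explicit topological separation argument producing such a $V$, which is indeed the one nontrivial ingredient and is handled as you describe.
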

The idea in the proof is that the projections $f \mapsto f(K)$ are continuous functions.
Since these statements in this lemma hold for all $h_U$ by Lemma~\ref{lem:prehaar-properties},
they also hold for each point in each set "cl_prehaar K₀ V", and therefore for $h$.

This function "chaar K₀" measures the size of the compact sets of "G".
It has the properties of a \emph{content} on the compact sets, which is an additive, subadditive and
monotone function into the nonnegative reals~\cite[\S 53]{halmos1950measure}.
From this content we can define the Haar measure in three steps.

First, given an arbitrary content,
we can obtain its associated \emph{inner content} on the open sets,~\href{https://github.com/leanprover-community/mathlib/blob/2cbaa9cc29bc44812b679810352ed383f35dcf75/src/measure_theory/content.lean#L54}{\link}\
\begin{lstlisting}
def inner_content (h : compacts G → ennreal) (U : opens G) : ennreal :=
⨆ (K : compacts G) (h : K ⊆ U), h K
\end{lstlisting}
This inner content is monotone~\href{https://github.com/leanprover-community/mathlib/blob/2cbaa9cc29bc44812b679810352ed383f35dcf75/src/measure_theory/content.lean#L82}{\link}\ and countably subadditive.~\href{https://github.com/leanprover-community/mathlib/blob/2cbaa9cc29bc44812b679810352ed383f35dcf75/src/measure_theory/content.lean#L127}{\link}\
Also note that for every set "K" that is both compact and open, we have "inner_content h K = h K".~\href{https://github.com/leanprover-community/mathlib/blob/2cbaa9cc29bc44812b679810352ed383f35dcf75/src/measure_theory/content.lean#L67}{\link}\

Second, given an arbitrary inner content "ν", we can obtain its associated outer measure "μ"~\href{https://github.com/leanprover-community/mathlib/blob/2cbaa9cc29bc44812b679810352ed383f35dcf75/src/measure_theory/content.lean#L179}{\link}\
which is given by "μ A = ⨅ (U : set G) (hU : is_open U) (h : A ⊆ U), ν A".~\href{https://github.com/leanprover-community/mathlib/blob/2cbaa9cc29bc44812b679810352ed383f35dcf75/src/measure_theory/content.lean#L202}{\link}\

For open sets "U" we have that "μ U = ν U".~\href{https://github.com/leanprover-community/mathlib/blob/2cbaa9cc29bc44812b679810352ed383f35dcf75/src/measure_theory/content.lean#L186}{\link}\
Suppose that the inner content "ν" comes from the content "h".
We have observed that "μ" agrees with "ν" on their common domain (the open sets),
and that "ν" agrees with "h" on their common domain (the compact open sets).
One might be tempted to conclude that "μ" agrees with "h" on their common domain (the compact sets).
This is implicitly claimed in Gleason~\cite{gleason2010existence},
since the same name is used for these three functions.
However, this is not necessarily the case~\cite[p. 233]{halmos1950measure}.
The best we can say in general is that for compact "K"
we have "μ (interior K) ≤ h K ≤ μ K"~\href{https://github.com/leanprover-community/mathlib/blob/2cbaa9cc29bc44812b679810352ed383f35dcf75/src/measure_theory/content.lean#L190}{\link}~\href{https://github.com/leanprover-community/mathlib/blob/2cbaa9cc29bc44812b679810352ed383f35dcf75/src/measure_theory/content.lean#L194}{\link}\

If we apply these two steps to the function "h = chaar K₀"
we obtain the Haar outer measure "haar_outer_measure K₀" on "G".~\href{https://github.com/leanprover-community/mathlib/blob/2cbaa9cc29bc44812b679810352ed383f35dcf75/src/measure_theory/haar_measure.lean#L444}{\link}\
Next, we show that all Borel measurable sets on "G" are Carathéodory measurable
w.r.t. "haar_outer_measure K₀".~\href{https://github.com/leanprover-community/mathlib/blob/2cbaa9cc29bc44812b679810352ed383f35dcf75/src/measure_theory/haar_measure.lean#L509}{\link}\
This gives the \emph{left Haar measure}.~\href{https://github.com/leanprover-community/mathlib/blob/2cbaa9cc29bc44812b679810352ed383f35dcf75/src/measure_theory/haar_measure.lean#L537}{\link}\
We also scale the Haar measure so that it assigns measure 1 to "K₀".
\begin{lstlisting}
def haar_measure (K₀ : positive_compacts G) : measure G :=
(haar_outer_measure K₀ K₀.1)⁻¹ •
  (haar_outer_measure K₀).to_measure (haar_caratheodory_measurable K₀)
\end{lstlisting}

\begin{theorem}\label{thm:haar-measure}
  Let $\mu$ be the left Haar measure on $G$, defined above.
  Then $\mu$ is a left invariant~\href{https://github.com/leanprover-community/mathlib/blob/2cbaa9cc29bc44812b679810352ed383f35dcf75/src/measure_theory/haar_measure.lean#L546}{\link}\ and regular~\href{https://github.com/leanprover-community/mathlib/blob/2cbaa9cc29bc44812b679810352ed383f35dcf75/src/measure_theory/haar_measure.lean#L571}{\link}\ measure satisfying $\mu(K_0)=1$.~\href{https://github.com/leanprover-community/mathlib/blob/2cbaa9cc29bc44812b679810352ed383f35dcf75/src/measure_theory/haar_measure.lean#L555}{\link}\
\end{theorem}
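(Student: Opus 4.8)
The plan is to derive all three claims from the properties of the content $h \vcentcolon= $ "chaar K₀" collected in Lemma~\ref{lem:chaar-properties}, by pushing them through the construction described above: from $h$ on compact sets, to the inner content $\nu \vcentcolon= $ "inner_content h" on open sets, to the outer measure $\mu_0 \vcentcolon= $ "haar_outer_measure K₀", to the Borel measure $\bar\mu_0$ obtained from $\mu_0$ via "haar_caratheodory_measurable K₀", and finally to the normalization $\mu = \mu_0(K_0)^{-1}\cdot\bar\mu_0$. I would use throughout the facts established for the generic content/inner-content constructions: that $\bar\mu_0$ agrees with $\mu_0$ on Borel sets; that for open $U$ one has $\mu_0(U) = \nu(U) = \sup\{h(K) \mid K \subseteq U \text{ compact}\}$; that $\mu_0$ is monotone; and that $\mu_0(\operatorname{int} K) \le h(K) \le \mu_0(K)$ for compact $K$ — taking care to invoke only these inequalities and never the equality $h(K) = \mu_0(K)$, which (as discussed above) may fail.

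First I would pin down the normalizing constant by showing $0 < \mu_0(K_0) < \infty$. The lower bound is immediate: $\mu_0(K_0) \ge h(K_0) = 1$. For the upper bound I would use local compactness to pick an open $U \supseteq K_0$ with compact closure $\overline U$, and then estimate
\[
\mu_0(K_0) \le \mu_0(U) = \nu(U) = \sup\{h(K) \mid K \subseteq U\} \le h(\overline U) \le (\overline U : K_0) < \infty,
\]
using monotonicity of $h$ for the step into $h(\overline U)$, and the fact that "chaar K₀" takes its values in "haar_product K₀" (so $h(K') \le (K' : K_0)$ for every compact $K'$) for the last step. Since $K_0$ is compact, hence Borel, it then follows that $\mu(K_0) = \mu_0(K_0)^{-1}\cdot\bar\mu_0(K_0) = \mu_0(K_0)^{-1}\cdot\mu_0(K_0) = 1$, the cancellation being legitimate precisely because $\mu_0(K_0) \in (0,\infty)$. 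For regularity, I would note that multiplying by the positive finite constant $\mu_0(K_0)^{-1}$ preserves each of the three clauses, so it suffices to check $\bar\mu_0$ is regular: finiteness on compacts is the estimate above with $K_0$ replaced by an arbitrary compact set; outer regularity is immediate from the definition $\mu_0(A) = \inf\{\nu(U) \mid A \subseteq U \text{ open}\} = \inf\{\mu_0(U) \mid A \subseteq U \text{ open}\}$; and for inner regularity of an open $U$ I would chain $\mu_0(U) = \nu(U) = \sup\{h(K) \mid K \subseteq U\} \le \sup\{\mu_0(K) \mid K \subseteq U\} \le \mu_0(U)$, using $h(K) \le \mu_0(K)$ and monotonicity.

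For left invariance — the part with the most moving parts — I would again reduce to $\mu_0$: scaling preserves left invariance, $\bar\mu_0$ agrees with $\mu_0$ on measurable sets, and the preimage of a measurable set under $x \mapsto gx$ is measurable because left translation is continuous. Fixing $g$ and writing $\ell_g$ for the homeomorphism $x \mapsto gx$, which restricts to bijections on compact sets and on open sets, I would first transfer the content's left invariance $h(xK) = h(K)$ to $\nu$: for open $V$,
\[
\nu(\ell_g^{-1}(V)) = \sup\{h(K) \mid \ell_g(K) \subseteq V\} = \sup\{h(\ell_g^{-1}(L)) \mid L \subseteq V\} = \sup\{h(L) \mid L \subseteq V\} = \nu(V),
\]
and then transfer this to $\mu_0$: for arbitrary $A$, using $\ell_g^{-1}(A) \subseteq U \iff A \subseteq \ell_g(U)$ and the bijection $U \mapsto \ell_g(U)$ between the corresponding families of open sets,
\[
\mu_0(\ell_g^{-1}(A)) = \inf\{\nu(U) \mid \ell_g^{-1}(A) \subseteq U\} = \inf\{\nu(\ell_g^{-1}(W)) \mid A \subseteq W\} = \inf\{\nu(W) \mid A \subseteq W\} = \mu_0(A).
\]

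The main obstacle here is organizational rather than conceptual — no individual step is hard — namely keeping the four layers (content, inner content, outer measure, measure) straight and tracking which preservation property (monotonicity, translation-invariance, agreement on a common domain) is needed at each transition; the one genuinely load-bearing quantitative fact is that $\mu_0(K_0)$ lies strictly between $0$ and $\infty$, which is what licenses both the identity $\mu(K_0) = 1$ and the transfer of regularity through the rescaling.
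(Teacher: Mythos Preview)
Your proposal is correct and follows precisely the route the paper lays out: the paper does not spell out a proof of this theorem in the text, but the construction and the surrounding lemmas (Lemma~\ref{lem:chaar-properties}, the agreement of the outer measure with the inner content on opens, and the two-sided inequality $\mu_0(\operatorname{int} K)\le h(K)\le\mu_0(K)$) are exactly the ingredients you invoke, and you thread them through the four layers in the intended way. In particular, your care to use only $h(K)\le\mu_0(K)$ and never the false equality is exactly the point the paper stresses, and your bound $\mu_0(K_0)\in(0,\infty)$ via $h(K_0)=1$ below and $h(\overline U)\le(\overline U:K_0)$ above is the natural argument.
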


We will show in Section~\ref{section:uniqueness} that the Haar measure is unique,
up to multiplication determined by a constant.
We finish this section by proving some properties about left invariant regular measures.

\begin{theorem}\label{thm:regular-measure}
  Let $\mu$ be a nonzero left invariant regular measure on $G$.
\begin{itemize}
  \item If $U$ is a nonempty open set, the $\mu(U)>0$~\href{https://github.com/leanprover-community/mathlib/blob/2cbaa9cc29bc44812b679810352ed383f35dcf75/src/measure_theory/group.lean#L181}{\link}\
  \item If $f : G \to \R$ is a nonnegative continuous function, not identically equal to 0. Then
  $\int^- f\, \mathrm{d}\mu>0$.~\href{https://github.com/leanprover-community/mathlib/blob/2cbaa9cc29bc44812b679810352ed383f35dcf75/src/measure_theory/group.lean#L189}{\link}\
\end{itemize}
\end{theorem}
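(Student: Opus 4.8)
The plan is to prove the first claim by a covering argument and then read off the second from it. For the first statement I would argue by contradiction: suppose $U$ is a nonempty open set with $\mu(U)=0$ and deduce that $\mu$ is the zero measure. The key observation is that for any nonempty open $U$ the family of left translates $\{gU \mid g \in G\}$ covers $G$, since for $h\in G$ and any $u\in U$ we have $h\in (hu^{-1})U$. Hence if $K\subseteq G$ is compact, finitely many translates $g_1U,\dots,g_nU$ already cover $K$. Since the preimage of $U$ under left multiplication by $g_i^{-1}$ is exactly $g_iU$, \lstinline{is_left_invariant} gives $\mu(g_iU)=\mu(U)=0$, and combining monotonicity with finite subadditivity of $\mu$ (inherited from its underlying outer measure) yields $\mu(K)\le\sum_{i=1}^n\mu(g_iU)=0$. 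So every compact set is $\mu$-null. Since $G$ itself is open, inner regularity gives $\mu(G)\le\sup_{K\subseteq G\text{ compact}}\mu(K)=0$, and then monotonicity forces $\mu(A)=0$ for every set $A$; thus $\mu=0$, contradicting the hypothesis.

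For the second statement, since $f\ge 0$ is not identically $0$, choose $x_0$ with $f(x_0)>0$ and set $c\vcentcolon= f(x_0)/2>0$. Continuity of $f$ makes $U\vcentcolon=\{x\in G\mid f(x)>c\}$ open, and it is nonempty because $x_0\in U$. As $G$ carries the Borel $\upsigma$-algebra, $U$ is measurable, so $c\chi_U$ is a simple function with $c\chi_U\le f$ pointwise, and hence
\[\int^- f\,\mathrm{d}\mu \ \ge\ \int^- c\chi_U\,\mathrm{d}\mu \ =\ c\cdot\mu(U)\ >\ 0,\]
where the final strict inequality is the first part applied to $U$.

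The main obstacle, such as it is, is the covering step in the first part, together with keeping straight the direction of translation in \lstinline{is_left_invariant} (preimage versus image) --- precisely the point flagged by the remark after that definition; everything else is a routine application of regularity, monotonicity, and the definition of $\int^-$. Note that no compactness assumption on $G$ is needed, since we only ever cover a single fixed compact set $K$, never $G$ itself.
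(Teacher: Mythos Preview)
Your argument is correct. The paper does not spell out a proof of this theorem (it only links to the \textsf{mathlib} sources), so there is no written proof to compare against in detail; that said, your covering argument for the first item and the level-set bound $c\chi_U\le f$ for the second are exactly the standard route and match what one finds in the linked formalization.
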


In this section we have only discussed the left Haar measure, which is left invariant.
One can similarly define the notion of right invariance for a measure.
We can easily translate between these two notions.
For a measure $\mu$ define the measure $\check\mu$ by $\check\mu(A)=\mu(A^{-1})$.~\href{https://github.com/leanprover-community/mathlib/blob/2cbaa9cc29bc44812b679810352ed383f35dcf75/src/measure_theory/group.lean#L76}{\link}\

\begin{theorem}\label{thm:inv-measure}\mbox{}
\begin{itemize}
  \item The operation $\mu\mapsto\check\mu$ is involutive.~\href{https://github.com/leanprover-community/mathlib/blob/2cbaa9cc29bc44812b679810352ed383f35dcf75/src/measure_theory/group.lean#L86}{\link}\
  \item $\mu$ is left invariant iff $\check\mu$ is right invariant.~\href{https://github.com/leanprover-community/mathlib/blob/2cbaa9cc29bc44812b679810352ed383f35dcf75/src/measure_theory/group.lean#L133}{\link}\
  \item $\mu$ is regular iff $\check\mu$ is regular.~\href{https://github.com/leanprover-community/mathlib/blob/2cbaa9cc29bc44812b679810352ed383f35dcf75/src/measure_theory/group.lean#L104}{\link}\
\end{itemize}
\end{theorem}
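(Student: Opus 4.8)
The plan is to identify $\check\mu$ with the pushforward $\iota_*\mu$ of $\mu$ along the inversion map $\iota\colon G\to G$, $\iota(x)=x^{-1}$. Since $\iota^{-1}(A)=\{x\mid x^{-1}\in A\}=A^{-1}$, the measures $\check\mu$ and $\iota_*\mu$ agree on every measurable set and hence are equal; and because inversion is continuous in a topological group, $\iota$ is Borel measurable, so $\iota_*\mu$ is the genuine pushforward and $\check\mu$ is indeed a measure. The one fact that makes everything work is that $\iota$ is its own inverse ($\iota\circ\iota=\mathrm{id}_G$): thus $\iota$ is a measurable isomorphism, $A$ is measurable iff $A^{-1}$ is, $\iota$ restricts to inclusion-preserving bijections on the open sets and on the compact sets, and $\nu\mapsto\check\nu$ is injective. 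Involutivity is then immediate: $(A^{-1})^{-1}=A$ for every set, so $\check{\check\mu}(A)=\check\mu(A^{-1})=\mu\big((A^{-1})^{-1}\big)=\mu(A)$ for all measurable $A$, and since two measures agreeing on the measurable sets are equal, $\check{\check\mu}=\mu$ (equivalently, $\iota_*\iota_*\mu=(\iota\circ\iota)_*\mu=\mu$).

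For the left/right invariance equivalence, write $L_g(h)=g\cdot h$ and $R_g(h)=h\cdot g$ for left and right translation. The only step that requires any work is the identity $\iota\circ L_{g^{-1}}=R_g\circ\iota$, immediate from $(g^{-1}\cdot h)^{-1}=h^{-1}\cdot g$, whose set-theoretic form is $\big(L_{g^{-1}}^{-1}(A)\big)^{-1}=R_g^{-1}(A^{-1})$. Pushing forward along $\iota$ and using functoriality of the pushforward,
\[(R_g)_*\check\mu=(R_g)_*\iota_*\mu=(R_g\circ\iota)_*\mu=(\iota\circ L_{g^{-1}})_*\mu=\iota_*\big((L_{g^{-1}})_*\mu\big).\]
If $\mu$ is left invariant then $(L_{g^{-1}})_*\mu=\mu$, so $(R_g)_*\check\mu=\check\mu$ for every $g$, i.e.\ $\check\mu$ is right invariant. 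Conversely, if $\check\mu$ is right invariant the displayed equality reads $\iota_*\big((L_{g^{-1}})_*\mu\big)=\check\mu=\iota_*\mu$; applying $\iota_*$ once more (equivalently, using injectivity of $\nu\mapsto\check\nu$) gives $(L_{g^{-1}})_*\mu=\mu$ for all $g$, so $\mu$ is left invariant.

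For the regularity equivalence we transfer the three clauses of the \lstinline{regular} structure across $\iota$, using that it carries compact sets to compact sets and open sets to open sets. Finiteness on compacts: $\check\mu(K)=\mu(K^{-1})<\infty$ since $K^{-1}$ is compact. Outer regularity: $U\mapsto U^{-1}$ is an inclusion-preserving bijection from the open sets containing $A^{-1}$ onto those containing $A$, with $\check\mu(U^{-1})=\mu(U)$, so the infimum of $\check\mu$ over open sets containing $A$ equals the infimum of $\mu$ over open sets containing $A^{-1}$, which is $\mu(A^{-1})=\check\mu(A)$. Inner regularity is the same computation with compact sets and a supremum in place of an infimum. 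Hence $\check\mu$ is regular whenever $\mu$ is, and the converse follows by applying this to $\check\mu$ together with involutivity.

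The main obstacle is modest: it is the setup of the first paragraph — chiefly recording that inversion is a measurable homeomorphism equal to its own inverse — together with the bookkeeping identity $\iota\circ L_{g^{-1}}=R_g\circ\iota$ in both its functional and set-theoretic forms. Everything afterwards is a routine transfer of structure along $\iota$, relying only on functoriality of the pushforward and the principle that a measure is determined by its values on measurable sets.
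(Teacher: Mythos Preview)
Your proof is correct and is exactly the natural approach: identify $\check\mu$ with the pushforward $\iota_*\mu$ along the inversion homeomorphism, then transport each property across $\iota$ using functoriality of pushforward and the intertwining identity $\iota\circ L_{g^{-1}}=R_g\circ\iota$. The paper does not give a written proof of this theorem (it only links to the formalized statements), but your argument matches what the formalization does and there is nothing to add.
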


Theorem~\ref{thm:inv-measure} shows that if $\mu$ is the left Haar measure on $G$,
then $\check\mu$ is a right Haar measure on $G$ (i.e.~a nonzero regular right invariant measure).

\section{Uniqueness of the Haar measure}
\label{section:uniqueness}

In this section we will show that the Haar measure is unique, up to multiplication by a constant.
For this proof, we followed the proof in Halmos~\cite[\S 59-60]{halmos1950measure}.
We will compare our proof with the proof given in Halmos at the end of the section.

Let $G$ be a second-countable locally compact group.
The main ingredient in the proof is that if we have two left invariant measures $\mu$ and $\nu$ on $G$,
we can transform expressions involving $\mu$ into expressions involving $\nu$
by applying a transformation to $G\times G$ that preserves the measure $\mu\times\nu$.
We need Tonelli's theorem to compute the integrals in this product.
Because $G$ is second countable,
the product $\upsigma$-algebra on $G\times G$ coincides with the Borel $\upsigma$-algebra.~\href{https://github.com/leanprover-community/mathlib/blob/2cbaa9cc29bc44812b679810352ed383f35dcf75/src/measure_theory/borel_space.lean#L512}{\link}\

Suppose $\mu$ and $\nu$ are left invariant and regular measures on $G$.
Since $G$ is second-countable, this also means that $\mu$ and $\nu$ are $\upsigma$-finite.~\href{https://github.com/leanprover-community/mathlib/blob/2cbaa9cc29bc44812b679810352ed383f35dcf75/src/measure_theory/borel_space.lean#L1447}{\link}\

We first show that $(x,y)\mapsto (x,xy)$ and $(x,y)\mapsto (yx,x^{-1})$ are
measure-preserving transformations.

\begin{lemma}\label{lem:measure-preserving}
  Let $\mu$ and $\nu$ be left invariant regular measures on $G$, then the transformations
  $S, T : G\times G \to G\times G$ given by $S(x,y)\vcentcolon=(x,xy)$ and $T(x,y)\vcentcolon=(yx,x^{-1})$
  preserve the measure $\mu\times \nu$, i.e. $S_*(\mu\times\nu)=\mu\times\nu=T_*(\mu\times\nu)$.~\href{https://github.com/leanprover-community/mathlib/blob/2cbaa9cc29bc44812b679810352ed383f35dcf75/src/measure_theory/prod_group.lean#L50}{\link}~\href{https://github.com/leanprover-community/mathlib/blob/2cbaa9cc29bc44812b679810352ed383f35dcf75/src/measure_theory/prod_group.lean#L89}{\link}\
\end{lemma}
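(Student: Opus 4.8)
The plan is to prove the two statements separately, in each case working on the level of measurable sets: two measures on $G\times G$ agree as soon as they agree on measurable sets, so it suffices to show $(\mu\times\nu)(S^{-1}(E)) = (\mu\times\nu)(E)$ and $(\mu\times\nu)(T^{-1}(E)) = (\mu\times\nu)(E)$ for every measurable $E\subseteq G\times G$. Both $S$ and $T$ are continuous (since $G$ is a topological group), hence Borel measurable; and because $G$ is second-countable, the Borel $\sigma$-algebra on $G\times G$ coincides with the product $\sigma$-algebra, so these maps are measurable between the product-measurable spaces and their pushforwards make sense. Note also that regularity of $\mu$ and $\nu$ enters only through $\sigma$-finiteness, which is what makes the product measure and Tonelli's theorem (Theorem~\ref{thm:tonelli}) available; no further use of regularity is needed.

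For $S$ the argument is short. Since $S^{-1}(E)=\{(x,y)\mid (x,xy)\in E\}$, its $x$-slice is $(S^{-1}(E))_x=\{y\mid xy\in E_x\}=x^{-1}E_x$, a left translate of the slice $E_x$, so $\nu\big((S^{-1}(E))_x\big)=\nu(E_x)$ by left invariance of $\nu$. Plugging this into the defining formula \eqref{eq:prod-def} for the product measure gives $(\mu\times\nu)(S^{-1}(E))=\int^- \nu\big((S^{-1}(E))_x\big)\,\mathrm{d}\mu(x)=\int^- \nu(E_x)\,\mathrm{d}\mu(x)=(\mu\times\nu)(E)$. For $T$ the naive slicing fails: since $T(x,y)=(yx,x^{-1})$ involves a right translation by $x$ in the first coordinate, the $y$-slice of $T^{-1}(E)$ comes out as a right translate of a slice of $E$, which left invariance does not control. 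The fix is to pass to an iterated Lebesgue integral, apply a substitution that uses left invariance of $\mu$, then swap the order of integration by Tonelli, then apply a substitution that uses left invariance of $\nu$. Concretely, using the symmetric form of \eqref{eq:prod-def} we get $(\mu\times\nu)(T^{-1}(E))=\int^-_G \mu\big((T^{-1}(E))^y\big)\,\mathrm{d}\nu(y)=\int^-_G\!\!\int^-_G \chi_E(yx,x^{-1})\,\mathrm{d}\mu(x)\,\mathrm{d}\nu(y)$. Substituting $x\mapsto y^{-1}x$ in the inner integral (legitimate by left invariance of $\mu$) turns the integrand into $\chi_E(x,x^{-1}y)$; Tonelli then lets us rewrite this as $\int^-_G\!\!\int^-_G \chi_E(x,x^{-1}y)\,\mathrm{d}\nu(y)\,\mathrm{d}\mu(x)$; and substituting $y\mapsto xy$ in the new inner integral (legitimate by left invariance of $\nu$) turns the integrand into $\chi_E(x,y)$. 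The result is $\int^-_G \nu(E_x)\,\mathrm{d}\mu(x)=(\mu\times\nu)(E)$, as desired. (One can also package this as $T = W\circ R$, where $R$ is the coordinate swap and $W(x,y)=(xy,y^{-1})$, reducing to the claim $W_*(\nu\times\mu)=\mu\times\nu$, but proving that claim requires essentially the same computation.)

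The main obstacle is the $T$ case: recognizing that a slice-by-slice argument does not work and that one must go through the iterated integral and use Tonelli, and then handling the accompanying measurability bookkeeping — that the slices of $T^{-1}(E)$ are measurable, that the function $(x,y)\mapsto\chi_E(x,x^{-1}y)$ is measurable (because multiplication and inversion are measurable and $E$ is measurable) so that Tonelli applies, and that each change of variables is a genuine instance of left invariance in the form $\int^- g(x)\,\mathrm{d}\mu(x)=\int^- g(ax)\,\mathrm{d}\mu(x)$. Everything else — computing $S^{-1}$, $T^{-1}$, and the relevant slices — is routine.
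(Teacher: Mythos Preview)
Your proof is correct. For $S$ your argument is essentially the paper's: you work with arbitrary measurable sets while the paper checks only rectangles (and then invokes that rectangles generate the product $\upsigma$-algebra), but the underlying computation --- identify the slice as a left translate and apply left invariance of $\nu$ --- is identical.

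For $T$ the routes genuinely diverge. You compute directly: write $(\mu\times\nu)(T^{-1}(E))$ as an iterated Lebesgue integral, substitute $x\mapsto y^{-1}x$ using left invariance of $\mu$, swap by Tonelli, then substitute $y\mapsto xy$ using left invariance of $\nu$. The paper instead observes the algebraic identity $T=S^{-1}RSR$, where $R(x,y)=(y,x)$ is the coordinate swap; since $S$ (hence $S^{-1}$) preserves both $\mu\times\nu$ and $\nu\times\mu$ by the argument already done, and $R$ interchanges these two product measures, the claim for $T$ follows by composition with no further integration. The decomposition you mention, $T=W\circ R$ with $W(x,y)=(xy,y^{-1})$, is not the same trick: as you say, it still leaves a nontrivial claim about $W$ to prove. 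The paper's factorisation through $S$ and $R$ alone is what makes the $T$ case a one-liner. Your direct computation buys self-containment and mirrors the style of the calculations in Lemmas~\ref{lem:right-translations} and~\ref{lem:left-invariant-lemma}; the paper's factorisation buys brevity and avoids a second appeal to Tonelli.
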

\begin{proof}
  It suffices to show that the measures are equal on measurable rectangles,
  since these generate the $\upsigma$-algebra on $G\times G$.
  Let $A\times B\subseteq G\times G$ be a measurable rectangle. For $S$ we compute:
  \begin{align*}
    S_*(\mu\times\nu)(A\times B)&=(\mu\times\nu)(S^{-1}(A\times B))\\
    &=\int^- \nu(S(x,{-})^{-1}(A\times B))\, \mathrm{d}\mu(x)\\
    &=\int^- \chi_A(x)\nu(x^{-1}B)\, \mathrm{d}\mu(x)\\
    &=\int^- \chi_A(x)\nu(B)\, \mathrm{d}\mu(x) = \mu(A)\nu(B)=(\mu\times\nu)(A\times B).
  \end{align*}
  Note that $S$ is an equivalence with inverse $S^{-1}(x,y)=(x,x^{-1}y)$.
  If we define $R(x,y)\vcentcolon=(y,x)$, we saw in Section~\ref{section:products} that $R_*(\mu\times\nu)=\nu\times\mu$.
  The claim that $T$ preserves measure now follows from the observation that $T=S^{-1}RSR$,
  which is easy to check.
\end{proof}

For measurable sets $A$ its left translates have the same measure: $\mu(xA)=\mu(A)$.
This might not be true right translates $\mu(Ax)$, but we can say the following~\cite[\S 59, Th. D]{halmos1950measure}.

\begin{lemma}\label{lem:right-translations}
  Let $\mu$ be a left invariant regular measure on $G$ and
  suppose that $\mu(A)>0$ for a measurable set $A\subseteq G$.
  Then $\mu(A^{-1})>0$~\href{https://github.com/leanprover-community/mathlib/blob/2cbaa9cc29bc44812b679810352ed383f35dcf75/src/measure_theory/prod_group.lean#L115}{\link}\ and $\mu(Ax)>0$~\href{https://github.com/leanprover-community/mathlib/blob/2cbaa9cc29bc44812b679810352ed383f35dcf75/src/measure_theory/prod_group.lean#L146}{\link}\ for $x\in G$.
  Furthermore, the map $f(x)\vcentcolon=\mu(Ax)$ is measurable.~\href{https://github.com/leanprover-community/mathlib/blob/2cbaa9cc29bc44812b679810352ed383f35dcf75/src/measure_theory/prod_group.lean#L125}{\link}\
\end{lemma}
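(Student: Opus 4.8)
The plan is to reduce all three assertions to Lemma~\ref{lem:measure-preserving}, applied with $\nu=\mu$, so that $T(x,y)=(yx,x^{-1})$ preserves $\mu\times\mu$. Together with the fact that $\mu\times\mu$ assigns $\mu(P)\,\mu(Q)$ to a rectangle $P\times Q$, this lets us trade statements about right translates of $\mu$ for statements about the product measure. The measurability of $A^{-1}$ needs no work: inversion $\iota\colon x\mapsto x^{-1}$ is continuous, hence Borel measurable, and $A^{-1}=\iota^{-1}(A)$.

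The crux is the observation that \emph{inversion sends $\mu$-null sets to $\mu$-null sets}. Given a measurable $B\subseteq G$, one checks directly that $T^{-1}(G\times B)=\{(x,y)\mid x^{-1}\in B\}=B^{-1}\times G$, hence, using Lemma~\ref{lem:measure-preserving},
\begin{align*}
  \mu(B^{-1})\,\mu(G) &= (\mu\times\mu)(B^{-1}\times G) = (\mu\times\mu)\bigl(T^{-1}(G\times B)\bigr)\\
  &= (\mu\times\mu)(G\times B) = \mu(B)\,\mu(G).
\end{align*}
Since $\mu(A)>0$ gives $\mu(G)\ge\mu(A)>0$, this forces $\mu(B)=0\Rightarrow\mu(B^{-1})=0$; taking contrapositives (and replacing $B$ by $B^{-1}$) gives $\mu(B)>0\Rightarrow\mu(B^{-1})>0$. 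In particular $\mu(A^{-1})>0$. For $\mu(Ax)>0$ we use the group identity $Ax=(x^{-1}A^{-1})^{-1}$: from $\mu(A)>0$ we obtain $\mu(A^{-1})>0$, then left invariance gives $\mu(x^{-1}A^{-1})=\mu(A^{-1})>0$, and a final application of the observation yields $\mu(Ax)=\mu\bigl((x^{-1}A^{-1})^{-1}\bigr)>0$.

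For the measurability of $f(x)=\mu(Ax)$, note that the $x$-slice of $T^{-1}(A\times G)$ is $\{y\mid yx\in A\}=Ax^{-1}$. Since slices of a measurable subset of $G\times G$ have $\mu$-measure depending measurably on the base point (the lemma underlying the definition of $\mu\times\mu$, which uses that $\mu$ is $\upsigma$-finite), the map $g\colon x\mapsto\mu(Ax^{-1})$ is measurable; and $f(x)=\mu\bigl(A(x^{-1})^{-1}\bigr)=g(x^{-1})$, so $f=g\circ\iota$ is measurable.

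I expect the only real obstacle to be spotting the transformation $T$ and the two rectangles $G\times B$ and $A\times G$; after that, computing the slices and shuffling inversions and left translates is routine. The one point needing care is that $\mu(G)$ may be $\infty$, so the displayed chain must be read in $[0,\infty]$, and one should argue that a product of two strictly positive elements of $[0,\infty]$ is strictly positive rather than ``dividing by $\mu(G)$''.
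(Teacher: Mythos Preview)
Your argument is correct. For the second and third claims you do essentially what the paper does: the identity $Ax=(x^{-1}A^{-1})^{-1}$ reduces $\mu(Ax)>0$ to the first claim plus left invariance, and the measurability of $x\mapsto\mu(Ax)$ is obtained by recognising $Ax^{-1}$ as a slice of a measurable subset of $G\times G$ and invoking \texttt{measurable\_measure\_prod\_mk\_left}, then composing with inversion. The paper uses $S^{-1}(G\times A)$ rather than your $T^{-1}(A\times G)$, but the slices coincide.

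Where you genuinely diverge is in the first claim. The paper shows $\mu(A^{-1})=0\Rightarrow\mu(A)=0$ by an integral computation: it writes $(\mu\times\mu)(A\times A)=(\mu\times\mu)(T^{-1}(A\times A))$, expands the right-hand side as $\int^{-}\mu(y^{-1}A\cap A^{-1})\,\mathrm{d}\mu(y)$, and bounds this by $\int^{-}\mu(A^{-1})\,\mathrm{d}\mu(y)=0$. You instead observe that $T^{-1}(G\times B)=B^{-1}\times G$ is itself a rectangle, so measure preservation of $T$ gives $\mu(B^{-1})\mu(G)=\mu(G)\mu(B)$ directly, with no integral needed; the positivity of $\mu(G)$ (forced by $\mu(A)>0$) then lets you cancel. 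Your route is shorter and avoids Tonelli entirely for this step, at the cost of the extra remark that in $[0,\infty]$ one argues via ``$c>0$ and $ac=0$ imply $a=0$'' rather than by division; you handle that point correctly. The paper's route, by contrast, never touches $\mu(G)$ and works uniformly even when $\mu(G)=\infty$ without any side remark.
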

\begin{proof}
  Let $S$ and $T$ be as in Lemma~\ref{lem:measure-preserving}.
  For the first claim, we will show that $\mu(A^{-1})=0$ implies that $\mu(A)=0$.
  It suffices to show that $(\mu\times\mu)(A\times A)=0$, which we can compute using Lemma~\ref{lem:measure-preserving}:
  \begin{align*}
  (\mu\times\mu)(A\times A)
  &=(\mu\times\mu)(T^{-1}(A\times A))\\
  &=\int^{-}\mu((T^{-1}(A\times A))^y)\, \mathrm{d}\mu(y)\\
  &=\int^{-}\mu(y^{-1}A\cap A^{-1})\, \mathrm{d}\mu(y)
  \le\int^{-}\mu(A^{-1})\, \mathrm{d}\mu(y)
  = 0.
  \end{align*}
  The second claim now easily follows,
  since $\mu(A)>0$ implies that $\mu(Ax)=\mu((xA^{-1})^{-1})>0$.

  For the final claim, it is not hard to see that $\mu(Ax)=\mu((S^{-1}(G\times A)^x)$,
  which is measurable by the symmetric version of "measurable_measure_prod_mk_left"
  from Section~\ref{section:products}.
\end{proof}

The following is a technical lemma that allows us to rewrite $\nu$-integrals
into $\mu$-integrals. This is the key lemma to show the uniqueness of the Haar measure.~\cite[\S 60, Th. A]{halmos1950measure}.
\begin{lemma}\label{lem:left-invariant-lemma}
  Let $\mu$ and $\nu$ be a left invariant regular measures on $G$,
  let $K\subseteq G$ be compact with $\nu(K)>0$, and suppose that $f:G\to[0,\infty]$ is measurable.
  Then~\href{https://github.com/leanprover-community/mathlib/blob/2cbaa9cc29bc44812b679810352ed383f35dcf75/src/measure_theory/prod_group.lean#L167}{\link}\
  \[\mu(K)\cdot\int^-\frac{f(y^{-1})}{\nu(Ky)}\, \mathrm{d}\nu(y)=\int^- f\, \mathrm{d}\mu.\]
\end{lemma}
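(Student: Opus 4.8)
The plan is to evaluate the Lebesgue integral of a single auxiliary function $g$ on $G\times G$ against $\mu\times\nu$ in two ways, combining Tonelli's theorem (Theorem~\ref{thm:tonelli}) with the fact that the transformation $T(x,y)=(yx,x^{-1})$ preserves $\mu\times\nu$ (Lemma~\ref{lem:measure-preserving}). I would take
\[ g(x,y)\vcentcolon=\frac{f(y^{-1})\,\chi_K(x)}{\nu(Ky)},\qquad (x,y)\in G\times G, \]
which makes sense because $\nu(Ky)>0$ for every $y$ by Lemma~\ref{lem:right-translations}. A preliminary step is to check that $g$ is measurable: $f$ is measurable, inversion and right translation are continuous, $K$ is closed (as $G$ is $T_2$) so $\chi_K$ is measurable, the map $y\mapsto\nu(Ky)$ is measurable by Lemma~\ref{lem:right-translations}, and multiplication and division on $[0,\infty]$ are measurable.

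First I would apply Tonelli's theorem to $\int^-_{G\times G}g\,\mathrm{d}(\mu\times\nu)$ with the inner integral taken over $x$ against $\mu$. Since $\chi_K(x)$ is the only factor depending on $x$, and $\int^-\chi_K\,\mathrm{d}\mu=\mu(K)$, this gives
\[ \int^-_{G\times G}g\,\mathrm{d}(\mu\times\nu)=\int^-_G\frac{f(y^{-1})\,\mu(K)}{\nu(Ky)}\,\mathrm{d}\nu(y)=\mu(K)\int^-\frac{f(y^{-1})}{\nu(Ky)}\,\mathrm{d}\nu(y), \]
which is the left-hand side of the lemma. (Note $Ky$ is compact, being a continuous image of $K$, so $\nu(Ky)<\infty$ by regularity.)

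For the second evaluation I would use that $T$ preserves $\mu\times\nu$, so by the change-of-variables formula for the Lebesgue integral under a pushforward, $\int^-_{G\times G}g\,\mathrm{d}(\mu\times\nu)=\int^-_{G\times G}g\circ T\,\mathrm{d}(\mu\times\nu)$, and then apply Tonelli's theorem again, now with the inner integral over $y$ against $\nu$. Since $(g\circ T)(x,y)=g(yx,x^{-1})=f(x)\,\chi_K(yx)/\nu(Kx^{-1})$ and $\{y:yx\in K\}=Kx^{-1}$, the inner integral equals $f(x)\,\nu(Kx^{-1})/\nu(Kx^{-1})$. The crucial observation is that this ratio is $1$: the set $Kx^{-1}$ is compact (again a continuous image of $K$), so $\nu(Kx^{-1})<\infty$ by regularity, while $\nu(Kx^{-1})>0$ by Lemma~\ref{lem:right-translations} (using $\nu(K)>0$), and a value strictly between $0$ and $\infty$ can be cancelled in $[0,\infty]$. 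Hence $\int^-_{G\times G}g\circ T\,\mathrm{d}(\mu\times\nu)=\int^-f\,\mathrm{d}\mu$, and equating the two computations yields the lemma. I expect the main difficulty to be bookkeeping rather than conceptual: getting the arithmetic in $[0,\infty]$ right (especially the cancellation above, which relies on both the finiteness coming from regularity and the positivity coming from Lemma~\ref{lem:right-translations}), keeping track of the two substitutions, and discharging the measurability side conditions of Tonelli's theorem — though the last is alleviated because Tonelli's theorem itself certifies measurability of the iterated integrands.
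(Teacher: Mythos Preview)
Your proof is correct and follows essentially the same route as the paper: both arguments integrate the function $\chi_K(x)\,f(y^{-1})/\nu(Ky)$ over $G\times G$, use Tonelli to pass between the product integral and iterated integrals, apply the change of variables coming from the $(\mu\times\nu)$-preserving map $T(x,y)=(yx,x^{-1})$ of Lemma~\ref{lem:measure-preserving}, and cancel $\nu(Kx^{-1})$ using $0<\nu(Kx^{-1})<\infty$ from Lemma~\ref{lem:right-translations} together with regularity. The only difference is cosmetic: the paper names the one-variable function $g(y)=f(y^{-1})/\nu(Ky)$ and presents the argument as a single chain of equalities, whereas you name the two-variable integrand and frame it as ``one integral computed two ways''.
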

\begin{proof}
  First note that $0<\nu(Ky)<\infty$ for any $y$.
  The first inequality follows from Lemma~\ref{lem:right-translations}
  and the second inequality from the fact that $\nu$ is regular and $Kx^{-1}$ is compact.
  Let $g(y)\vcentcolon=\frac{f(y^{-1})}{\nu(Ky)}$, then by Lemma~\ref{lem:right-translations},
  $g$ is measurable. Now we compute
  \begin{align*}
    \mu(K)\cdot\int^-g(y)\, \mathrm{d}\nu(y)
    &=\int^-\chi_K(x)\, \mathrm{d}\mu(x)\cdot\int^-g(y)\, \mathrm{d}\nu(y)\\
    &=\int^-\!\!\!\int^-\chi_K(x)g(y)\, \mathrm{d}\nu(y)\, \mathrm{d}\mu(x)\\
    &=\int^-\!\!\!\int^-\chi_K(yx)g(x^{-1})\, \mathrm{d}\nu(y)\, \mathrm{d}\mu(x)&&\text{(by Lemma~\ref{lem:measure-preserving})}\\
    &=\int^-\nu(Kx^{-1})g(x^{-1})\, \mathrm{d}\mu(x)\\
    &=\int^-f(x)\, \mathrm{d}\mu(x).&&\qedhere
  \end{align*}
\end{proof}

As a consequence, we get the uniqueness of Haar measure: every left invariant measure is a multiple of
Haar measure.
\begin{theorem}\label{thm:haar-unique}
  Let $\nu$ be a left invariant regular measure and $K_0$ be a compact set with non-empty interior.
  If $\mu$ is the Haar measure on $G$ with $\mu(K_0)=1$, then $\nu = \nu(K_0)\cdot \mu$.~\href{https://github.com/leanprover-community/mathlib/blob/2cbaa9cc29bc44812b679810352ed383f35dcf75/src/measure_theory/haar_measure.lean#L597}{\link}\
\end{theorem}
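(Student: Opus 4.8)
The plan is to reduce the claimed equality of measures to an equality of lower Lebesgue integrals, and then obtain that equality by applying Lemma~\ref{lem:left-invariant-lemma} twice. Since two measures coincide as soon as they agree on all measurable sets, and $\nu(A)=\int^-\chi_A\,\mathrm{d}\nu$ and $\mu(A)=\int^-\chi_A\,\mathrm{d}\mu$, it suffices to prove
\[\int^- f\,\mathrm{d}\nu = \nu(K_0)\cdot\int^- f\,\mathrm{d}\mu\]
for every measurable $f:G\to[0,\infty]$ and then specialize to $f=\chi_A$. Both measures in play are left invariant and regular: $\nu$ by hypothesis, and $\mu$ by Theorem~\ref{thm:haar-measure}, which also records that $\mu(K_0)=1$. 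Because $\mu$ is regular it is finite on compact sets, so $\mu(K_0)=1\in(0,\infty)$; in particular $\mu(K_0)>0$, which is exactly the side condition needed to use $K_0$ as the compact set in Lemma~\ref{lem:left-invariant-lemma} whenever the second measure appearing there is $\mu$.

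Now apply Lemma~\ref{lem:left-invariant-lemma} twice, each time with the compact set $K_0$. First, with the pair of measures $(\mu,\mu)$ (the lemma places no requirement that the two measures be distinct), we obtain
\[\mu(K_0)\cdot\int^-\frac{f(y^{-1})}{\mu(K_0 y)}\,\mathrm{d}\mu(y)=\int^- f\,\mathrm{d}\mu,\]
and since $\mu(K_0)=1$ this says precisely that the auxiliary integral $\int^-\frac{f(y^{-1})}{\mu(K_0 y)}\,\mathrm{d}\mu(y)$ equals $\int^- f\,\mathrm{d}\mu$. Second, with the pair $(\nu,\mu)$ we obtain
\[\nu(K_0)\cdot\int^-\frac{f(y^{-1})}{\mu(K_0 y)}\,\mathrm{d}\mu(y)=\int^- f\,\mathrm{d}\nu.\]
Substituting the first identity into the left-hand side of the second yields $\nu(K_0)\cdot\int^- f\,\mathrm{d}\mu=\int^- f\,\mathrm{d}\nu$, which is the desired integral identity. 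Taking $f=\chi_A$ for an arbitrary measurable $A\subseteq G$ gives $\nu(A)=\nu(K_0)\cdot\mu(A)$, and hence $\nu=\nu(K_0)\cdot\mu$.

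Given Lemma~\ref{lem:left-invariant-lemma}, there is essentially no remaining obstacle: the only points requiring a little care are the $[0,\infty]$-valued bookkeeping — the scalar $\mu(K_0)$ is the genuine finite real $1$, so the rearrangement above is a plain substitution rather than a division and no degenerate $0\cdot\infty$ situation can occur — and the observation that Lemma~\ref{lem:left-invariant-lemma} may be instantiated with the same measure in both roles. The substantive work lies entirely upstream, in Lemma~\ref{lem:left-invariant-lemma} itself and in the measure-preserving transformations of Lemma~\ref{lem:measure-preserving} on which it rests.
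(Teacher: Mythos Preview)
Your proof is correct and matches the paper's argument essentially line for line: both apply Lemma~\ref{lem:left-invariant-lemma} twice with the pairs $(\nu,\mu)$ and $(\mu,\mu)$, use $\mu(K_0)=1$ to eliminate the auxiliary integral, and conclude by testing on $\chi_A$. The only cosmetic difference is that the paper works directly with $f=\chi_A$ from the start rather than stating the intermediate identity for general measurable $f$.
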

In the code, this theorem is stated as follows (in the formalization we assume $\mu$ is $\upsigma$-finite instead of regular):
\begin{lstlisting}
theorem haar_measure_unique [sigma_finite μ] (hμ : is_left_invariant μ)
  (K₀ : positive_compacts G) : μ = μ K₀ • haar_measure K₀
\end{lstlisting}
\begin{proof}%[Proof (of Theorem~\ref{thm:haar-unique})]
  Let $A$ be any measurable set.
  The result follows from the following computation,
  where we apply Lemma~\ref{lem:left-invariant-lemma} twice,
  once with measures $(\nu,\mu)$ and once with measures $(\mu,\mu)$:
  \begin{align*}
    \nu(A)&=\int^- \chi_A\, \mathrm{d}\nu
    =\nu(K_0)\cdot\int^-\frac{\chi_A(y^{-1})}{\mu(K_0y)}\, \mathrm{d}\mu(y)\\
    &=\nu(K_0)\cdot \mu(K_0)\cdot \int^-\frac{\chi_A(y^{-1})}{\mu(K_0y)}\, \mathrm{d}\mu(y)
    =\nu(K_0)\cdot \int^- \chi_A\, \mathrm{d}\mu
    =\nu(K_0)\cdot \mu(A).\qedhere
  \end{align*}
\end{proof}

For this proof we followed Halmos~\cite[\S 59-60]{halmos1950measure},
but there are some differences.
Halmos gives a more general version of Lemma~\ref{lem:left-invariant-lemma},
for an arbitrary measurable group,
while we did it for a
second-countable locally compact group equipped with the Borel $\upsigma$-algebra (which forms a measurable group).
There are other proofs, like in Cohn~\cite[Theorem 9.2.6]{cohn2013measure},
that do not require $G$ to be second countable.
Another difference between our and Halmos' proof is that we assume that $K$ is compact,
while Halmos assumes that $K$ is measurable with $\nu(K)<\infty$.
However, in the proof he implicitly uses that also $\nu(Ky)<\infty$,
a fact that is not justified in the text,
and that I was unable to prove from the assumption that $\nu(K)<\infty$.
At this point I do not know whether this claim is true or not.
Lemma~\ref{lem:right-translations} claims something similar,
namely that if $\nu(K)>0$ then $\nu(Ky)>0$,
but the proof of this claim does not work to show finiteness of $\nu(Ky)$.
We patched this gap by assuming that $K$ is compact and $\nu$ is regular,
which implies that $\nu(Ky)<\infty$ because $Ky$ is also compact.

\section{Concluding Thoughts} % and future work
\label{section:conclusion}

The formalization of product measures is about 900 lines of code,
and the formalization of Haar measure is about 1300 lines of code
(including blank lines and comments).
The proofs of these concepts use a wide variety of techniques,
from the reasoning in Banach spaces for Fubini's theorem,
to topological compactness arguments in the existence proof of the Haar measure,
to the calculcational proofs in the uniqueness of the Haar measure.
This formalization shows that these different proof techniques can be used and combined effectively
in \textsf{mathlib} to formalize graduate level mathematics.
% Another instance of graduate level mathematics formalized in \textsf{mathlib} is the ring of Witt vectors~\cite{commelin2021witt}.

This work can be extended in various ways.
One interesting project would be to prove the uniqueness for groups
that need not be second countable. This should not be too difficult,
but requires a version of Fubini's theorem for compactly supported functions on locally compact Hausdorff spaces.
Another project would be too explore the \emph{modular function},
which describes how the left and the right Haar measures differ on a locally compact group.
Furthermore, the Haar measure is used as a tool in many areas of mathematics,
and it would be interesting to investigate which of these topics are now amenible to formalization,
with the Haar measure as new tool in our toolbelt.

One area that should be feasible to formalize is abstract harmonic analysis.
Abstract harmonic analysis is the area of analysis on topological groups,
usually locally compact groups.
One of the core ideas is to generalize the Fourier transform
to an arbitrary locally compact abelian group.
%Furthermore, harmonic analysis investigates representations of locally compact groups.
The special case of the continuous Fourier transform has been formalized in HOL4~\cite{guan2020fourier},
but the abstract Fourier transform has never been formalized.
One specific goal could be to formallize \emph{Pontryagin duality}.
For a locally compact abelian group $G$ we can define
the \emph{Pontryagin dual} $\widehat{G}\vcentcolon=\Hom(G,T)$
consisting of the continuous group homomorphisms from $G$ to the circle group $T$.
Then Pontryagin duality states that the canonical map $G\to\widehat{\widehat{G}}$
(given by the evaluation function) is an isomorphism.
More ambitious targets include representation theory of compact groups,
Peter--Weyl theorem and Weyl's integral formula for compact Lie groups.

In \textsf{mathlib} we have not just defined binary product measures,
but also finitary product measures,~\href{https://github.com/leanprover-community/mathlib/blob/2cbaa9cc29bc44812b679810352ed383f35dcf75/src/measure_theory/pi.lean#L212}{\link}\
which can be used to define integrals over $\R^n$.
In \textsf{mathlib} this is done by defining a measure on the product space "Π i : ι, X i".
%which is the same as "ι → X" if "X" is constant.
\begin{lstlisting}
def measure.pi {ι : Type*} [fintype ι] {X : ι → Type*}
  [∀ i, measurable_space (X i)] (μ : ∀ i, measure (X i)) :
  measure (Π i, X i)
\end{lstlisting}
The definition is conceptually very simple,
since you just iterate the construction of binary product measures,
but in practice it is quite tricky to do it in a way that is convenient to use.
For example, in a complicated proof, you do not want to worry about the fact that the spaces
$X\times(Y\times Z)$ and $(X\times Y)\times Z$ are not exactly the same, just equivalent.
One open question is whether we can formulate Fubini's theorem for finitary product measures
in a way that is convenient to use. It would be inconvenient to rewrite a finitary product of
measures into a binary product in the desired way,
and then apply Fubini's theorem for binary product measures.
We have an idea that we expect to be more promising, which is to define a new kind integral
that only some variables in the input of a function.
It takes a function "f : (Π i, π i) → E" and integrates away the variables in a subset "I".
The result is another function "(Π i, π i) → E" that is constant on all variables in "I":
\begin{lstlisting}
def integrate_away (μ : ∀ i, measure (π i)) (I : finset ι)
  (f : (Π i, π i) → E) : (Π i, π i) → E
\end{lstlisting}
This has not been incorporated in \textsf{mathlib},
but it is a promising experiment on a separate branch.~\href{https://github.com/leanprover-community/mathlib/blob/d0bab9ddb55bc885b6fd39d1eb21a053bd6e1d0c/src/measure_theory/pi.lean#L397}{\link}\
If we denote "integrate_away μ I f" as "∫⋯∫_I, f ∂μ",
then the following is a convenient way to state Fubini's theorem for finitary product measures.
\begin{lstlisting}
lemma integrate_away_union (f : (Π i, π i) → E) (I J : finset ι) :
  disjoint I J → ∫⋯∫_I ∪ J, f ∂μ = ∫⋯∫_ I, ∫⋯∫_J, f ∂μ ∂μ
\end{lstlisting}

\bibliography{references}

\end{document}